\def\BibTeX{{\rm B\kern-.05em{\sc i\kern-.025em b}\kern-.08em
    T\kern-.1667em\lower.7ex\hbox{E}\kern-.125emX}}
\newcommand*{\encircled}[1]{\relax\ifmmode\mathpalette\@encircled@math{#1}\else\@encircled{#1}\fi}
\newcommand*{\@encircled@math}[2]{\@encircled{$\m@th#1#2$}}
\newcommand*{\@encircled}[1]{%
\tikz[baseline,anchor=base]{\node[draw,circle,outer sep=0pt,inner sep=.2ex] {#1};}}
\newtheorem{theorem}{Theorem}
\newtheorem{lemma}{Lemma}
\newtheorem{proposition}{Proposition}%
\newtheorem{corollary}{Corollary}%
\theoremstyle{remark}%
\newtheorem{remark}{Remark}%
\theoremstyle{definition}%
\newtheorem{definition}{Definition}%
\newtheorem{assumption}{Assumption}%
\newcommand{\qac}{h}
\newcommand{\dqac}{v}
\newcommand{\delths}{{\Delta\theta^*}}
\newcommand{\drho}{D\rho}
\newcommand{\Sp}{S_p}
\newcommand{\Sm}{S_m}
\begin{document}
\title{Discrete VHCs for Propeller Motion of a Devil-Stick using purely Impulsive Inputs}
\author{Aakash~Khandelwal, 
and~Ranjan~Mukherjee,~\IEEEmembership{Senior Member}
\thanks{This work was supported by the National Science Foundation, under Grant CMMI-2043464}%
\thanks{The authors are with the Department of Mechanical Engineering, Michigan State University, East Lansing, MI 48824, USA
{\tt\footnotesize khande10@egr.msu.edu}, {\tt\footnotesize mukherji@egr.msu.edu}}
}


\maketitle

\begin{abstract}
The control problem of realizing propeller motion of a devil-stick in the vertical plane using impulsive forces applied normal to the stick is considered. This problem is an example of underactuated robotic juggling and has not been considered in the literature before. Inspired by virtual holonomic constraints, the concept of discrete virtual holonomic constraints (DVHC) is introduced for the first time to solve this orbital stabilization problem. At the discrete instants when impulsive inputs are applied, the location of the center-of-mass of the devil-stick is specified in terms of its orientation angle. This yields the discrete zero dynamics (DZD), which provides conditions for stable propeller motion. In the limiting case, when the rotation angle between successive applications of impulsive inputs is chosen to be arbitrarily small, the problem reduces to that of propeller motion under continuous forcing. A controller that enforces the DVHC, and an orbit stabilizing controller based on the impulse controlled Poincar\'e map approach are presented. The efficacy of the approach to trajectory design and stabilization is validated through simulations.
\end{abstract}

\begin{IEEEkeywords}
Devil-stick, impulse controlled Poincar\'e map, nonprehensile manipulation, orbital stabilization, virtual holonomic constraints.
\end{IEEEkeywords}

\section*{Nomenclature}
\begin{tabularx}{\columnwidth}{lX}
$g$ & acceleration due to gravity, (m/s$^2$)\\
$h_x, h_y$ & Cartesian coordinates of the center-of-mass of the devil-stick, (m) \\
$\ell$ & length of the devil-stick, (m)\\
$m$ & mass of the devil-stick, (kg)\\
$r$ & distance of point of application of impulsive force from the center-of-mass of the devil-stick, it is considered to be positive when the moment of the impulsive force about the center-of-mass is counter-clockwise, (m) \\
$v_x, v_y$ & Cartesian components of velocity of the center-of-mass of the devil-stick, (m/s) \\
$I$ & impulse of impulsive force applied on the devil-stick, (Ns)  \\
$J$ & mass moment of inertia of the devil-stick about its center-of-mass, (kgm$^2$)\\
$\theta$ & orientation of the devil-stick, measured positive counter-clockwise with respect to the horizontal axis, (rad) \\
$\omega$ & angular velocity of the devil-stick, (rad/s) 
\end{tabularx}

\section{Introduction} \label{sec1}

The \emph{propeller} is a devil-stick trick where the stick is made to spin like a propeller about a horizontal axis. This paper addresses the control problem of realizing propeller motion of the devil-stick using intermittent impulsive forces, which has not been considered heretofore. The problem is an example of underactuated robotic juggling, where the objective is to achieve stable, periodic motion of the devil-stick in the vertical plane. Unlike previous work on robotic juggling, this problem cannot be reduced to one of fixed-point stabilization as the dynamics cannot be rendered invariant at successive configurations through a coordinate transformation.

The dynamics of juggling \cite{lynch_dynamic_1999, mason_progress_1999, ruggiero_nonprehensile_2018} is inherently hybrid, and involves a flow associated with the unconstrained motion of the object and a jump in states following impulsive actuation. The control problem for both point and extended objects have been considered. For point objects \cite{zavala-rio_control_1999, brogliato_control_2000, zavala-rio_direct_2001, spong_impact_2001, sepulchre_stabilization_2003}, a majority of the work has investigated the combined object-robot system, where the continuous-time dynamics of the actuator is modeled in addition to the dynamics of the object. The controllability and stability of such hybrid systems has been investigated and control algorithms for juggling have been proposed \cite{brogliato_controllability_2006, brogliato_modeling_2023, tornambe_modeling_1999, lynch_recurrence_2001, buehler_planning_1994, posa_stability_2016}.
One-DOF ball juggling was analyzed in \cite{zavala-rio_control_1999} using Poincar\'e maps that captured the combined dynamics of the impact and flight phases; the control problem of tracking a reference trajectory was realized using a hybrid controller. This procedure was extended to a class of complementary-slackness jugglers in \cite{brogliato_control_2000}, which modeled both the object and the robot connected by a unilateral constraint and an impact law based on the coefficient of restitution. A hybrid control strategy was used for trajectory tracking in \cite{sanfelice_hybrid_2007} and finite-time stabilization was achieved following impacts between the actuated platform and the ball. Blind jugglers have been studied in \cite{ronsse_rhythmic_2007} modeling the complete dynamics of the ball and actuator via a Poincar\'e map; a discrete reference for the ball was tracked by controlling the motion of the actuator as a function of impact times. A blind juggler for apex height control of a ball was realized in \cite{reist_design_2012}.

The problem of devil-stick juggling has been studied in \cite{schaal_open_1993, kant_non-prehensile_2021, kant_juggling_2022, khandelwal_nonprehensile_2023}. Compared to ball-juggling, this problem models the orientation of the stick, which introduces additional DOFs to be controlled. Planar stick juggling \cite{kant_non-prehensile_2021, kant_juggling_2022} is underactuated with a single degree of underactuation, while stick juggling in three dimensions \cite{khandelwal_nonprehensile_2023} has two degrees of underactuation, making the control problem more challenging.
In \cite{kant_non-prehensile_2021, kant_juggling_2022, khandelwal_nonprehensile_2023}, the hybrid dynamic model of the juggled stick is derived, and impulsive control inputs which stabilize the orbit associated with a desired steady juggling motion are obtained.

The above juggling problems of balls or sticks equate to the problem of stabilizing a fixed point (which may be time-varying – see \cite{khandelwal_maneuvering_2024}, for example) of a Poincar\'e map describing the dynamics of the juggled object in an appropriate reference frame. To address more complex underactuated juggling tasks, it would be preferable to avoid specifying a reference trajectory for the juggled object. To this end, this paper extends the notion of virtual holonomic constraints (VHCs) \cite{maggiore_virtual_2013, mohammadi_dynamic_2018}, which avoid the need for the controller to track time-varying reference trajectories.

VHCs have been used extensively for trajectory design and stabilization for underactuated systems with continuous-time dynamics such as the cart-pendulum \cite{shiriaev_constructive_2005}, as well as hybrid dynamics such as bipedal robots \cite{grizzle_asymptotically_2001, kao-vukovich_synthesis_2023, khandelwal_design_2023}. Once the VHCs are enforced using an appropriate control design, the stability properties of the system can be deduced from investigating the induced zero dynamics. This has been done extensively for bipeds, in which the stability of the gait is governed by the hybrid zero dynamics (HZD). VHCs have also been used for trajectory design in nonprehensile manipulation tasks involving continuous contact between the object and actuator \cite{surov_case_2015}; this includes the problem of propeller motion of a devil-stick \cite{shiriaev_generating_2006, khandelwal_propeller_2025}. 

Orbital stabilization of propeller motion under continuous forcing has been considered in \cite{kawaida_feedback_2003, nakaura_enduring_2004, nakamura_enduring_2009, shiriaev_generating_2006, aoyama_realization_2015} using normal and tangential forcing. These approaches assumed the actuator rolling without slipping on the devil stick, and instantaneous resetting of the contact point following a complete rotation of the devil-stick. In \cite{khandelwal_propeller_2025}, the authors presented an alternative approach that omitted tangential forcing, and controlled the applied normal force and its point of application instead. A circular trajectory for the center-of-mass of the devil stick was specified using VHCs and it was shown that a desired propeller motion could be stabilized from arbitrary initial conditions.

This paper considers the problem of realizing propeller motion of a devil-stick using purely impulsive forces applied normal to the stick. Unlike \cite{kant_non-prehensile_2021, khandelwal_nonprehensile_2023}, this orbital stabilization problem is not reducible to one of fixed-point stabilization in a rotating reference frame. To solve this problem, we introduce the concept of discrete virtual holonomic constraints (DVHCs). Similar to VHCs, DVHCs are geometric constraints on the system coordinates that eliminate the need for tracking time-varying reference trajectories. However, while VHCs are enforced by continuous feedback such that they are satisfied for all time, DVHCs are enforced by discrete feedback and are satisfied at the discrete instants of time when the system is subject to impulsive control inputs, but not necessarily in the interval between successive impulsive inputs. We specify a trajectory for the devil-stick such that its center-of-mass lies on a circle at the discrete instants when impulsive forces are applied on the stick. These instants correspond to rotation of the devil-stick by a constant angle, which is chosen \emph{a priori}.
A control design that enforces the DVHC from arbitrary initial conditions is presented, and it is shown that it drives system trajectories to a solution set which is a discrete-time analog of the constraint manifold in continuous-time. As with VHCs, the behavior of the dynamical system resulting from enforcing the DVHCs is described by autonomous equations in the passive coordinate and velocity alone – the \emph{discrete zero dynamics} (DZD). It is shown that the DZD permits a collection of periodic orbits which are each stable, but not asymptotically stable. We present a control design employing the impulse controlled Poincar\'e map (ICPM) approach \cite{kant_orbital_2020}, which has been applied to both continuous and hybrid systems \cite{kant_juggling_2022, khandelwal_nonprehensile_2023, khandelwal_design_2023}, to asymptotically stabilize a specific periodic orbit.\ 

The main contributions of this work are:
\begin{itemize}
    \item Introducing discrete virtual holonomic constraints (DVHCs) as a tool for trajectory design in underactuated systems which permit application of intermittent impulsive control inputs only.
    \item Presenting a control strategy to enforce the DVHC for an underactuated devil-stick juggling task.
    \item Deriving the discrete zero dynamics (DZD) induced from applying the DVHC, and investigating its stability properties through analysis and simulation. 
    \item Applying the ICPM approach to stabilize the orbit defining a specific juggling motion from a collection of orbits.
    \item Demonstrating that the constraints on velocities induced by DVHCs and the DZD presented here reduce to the established results for continuous-time propeller motion in the limit when the rotation angle between successive applications of impulsive inputs is arbitrarily small.
\end{itemize}

This paper is organized as follows. In Section \ref{sec2}, we derive the hybrid dynamics of the system. We introduce the concept of DVHCs in Section \ref{sec3}, compare them with VHCs, and present a control design that enforces the DVHCs. In section \ref{sec4}, we derive the DZD, which describes the qualitative behavior of the system when the DVHCs are satisfied. The DZD is investigated in detail in Appendix \ref{appendix:zero-dyn}. In Section \ref{sec5}, we describe a process to stabilize a desired propeller motion. Simulation results are presented in Section \ref{sec6}, and demonstrate the effectiveness of the approach for trajectory design and stabilization.\

\section{System Dynamics} \label{sec2}

\subsection{System Description} \label{sec21}

\begin{figure}[b!]
    \centering
    \psfrag{A}{$x$}
    \psfrag{B}{$y$}
    \psfrag{C}{$\ell$}
    \psfrag{D}{$\theta$}
    \psfrag{E}{$m,\, J$}
    \psfrag{F}{\small{center-of-mass $G$}}
    \psfrag{K}{$g$}
    \psfrag{P}{$I$}
    \psfrag{Q}{$r$}
    \includegraphics[width=0.58\linewidth]{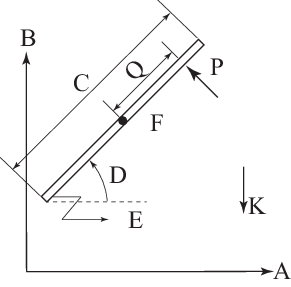}
    \caption{A devil-stick in the vertical plane with configuration variables $(h_x, h_y, \theta)$, and control variables $(I, r)$.}
    \label{fig:description}
\end{figure}

Consider the devil-stick in the vertical $xy$ plane as shown in Fig.\ref{fig:description}. The devil-stick has three DOFs and its configuration is described by the three independent generalized coordinates $(h_x, h_y, \theta)$.
The devil-stick is manipulated using purely impulsive forces of impulse $I$ applied normal to the devil-stick at a distance $r$ from the center-of-mass; $I$ and $r$ are considered to be the control inputs for the system. The impulsive force is applied only when the coordinate $\theta$ evolves by a constant amount $\delths$, $\delths \in (0, \pi)$, which is chosen \emph{a priori}. 
Between successive applications of the impulsive force, the devil-stick falls freely under gravity. The resulting dynamics of the system is therefore hybrid.
Since the number of control inputs is one fewer than the number of generalized coordinates, the system is underactuated with a single passive DOF. The vector of generalized coordinates is represented by
\begin{equation} \label{eq:generalized-coord}
    q \triangleq \begin{bmatrix} h^T & \!\!|\,\,\theta \end{bmatrix}^T, \quad h = \begin{bmatrix} h_x & h_y \end{bmatrix}^T
\end{equation}

\noindent where $h \in \mathbb{R}^2$ are treated as the active coordinates and $\theta \in \mathbb{R}$ is treated as the passive coordinate. The vector of generalized velocities is denoted by
\begin{equation} \label{eq:generalized-vel}
    \dot q \triangleq \begin{bmatrix} v^T & \!\!|\,\,\omega \end{bmatrix}^T, \quad v = \begin{bmatrix} v_x & v_y \end{bmatrix}^T
\end{equation}

\noindent where $v \in \mathbb{R}^2$ and $\omega \in \mathbb{R}$. This work will focus on realizing propeller motion of the devil-stick in the counterclockwise direction, and assumes that $\omega > 0$.\

\subsection{Dynamic Model} \label{sec22}

\subsubsection{Impulsive Dynamics} \label{sec221}

Let $t_k, k = 1, 2, \dots,$ denote the instants when impulsive inputs are applied on the stick, with $t_k^-$ and $t_k^+$ denoting instants immediately before and after application of the impulsive inputs. Since impulsive inputs cause no change in position coordinates \cite{kant_orbital_2020}, 
\begin{equation} \label{eq:impulsive-posn}
    \begin{split}
        h_x(t_k^+) &= h_x(t_k^-), \quad h_y(t_k^+) = h_y(t_k^-) \\
        \theta(t_k^+) &= \theta(t_k^-) \triangleq \theta_k
    \end{split}
\end{equation}

\noindent where $\theta_k$ is defined for notational simplicity. The linear impulse-momentum relationships in the $x$ and $y$ directions give
\begin{subequations}
\begin{align}
    m v_x(t_k^+) &= m v_x(t_k^-) - I_k \sin\theta_k \\
    m v_y(t_k^+) &= m v_y(t_k^-) + I_k \cos\theta_k  
\end{align}
\end{subequations}

\noindent where $I_k \triangleq I(t_k)$. It follows that the discontinuous jumps in the Cartesian velocities are
\begin{subequations}
\begin{align}
    v_x(t_k^+) &= v_x(t_k^-) - \frac{I_k}{m} \sin\theta_k \label{eq:impulsive-vel-x} \\
    v_y(t_k^+) &= v_y(t_k^-) + \frac{I_k}{m} \cos\theta_k \label{eq:impulsive-vel-y}
\end{align} 
\end{subequations}

\noindent The angular impulse-momentum relationship gives
\begin{align}
    J \omega(t_k^+) &= J \omega(t_k^-) + I_k r_k \\
    \Rightarrow \qquad \omega(t_k^+) &= \omega(t_k^-) + \frac{I_k r_k}{J} \label{eq:impulsive-ang-vel}
\end{align}

\noindent where $r_k \triangleq r(t_k)$.\

\subsubsection{Continuous-time Dynamics} \label{sec222}

Between successive impulsive inputs, \emph{i.e.}, in the interval $t \in [t_k^+, t_{k+1}^-]$, the devil-stick undergoes torque-free motion under gravity. The motion of the center-of-mass is described by:
\begin{alignat}{2}
    \dot h_x &= v_x, \quad \dot h_y &&= v_y \label{eq:continuous-time-pos} \\
    \dot v_x &= 0, \quad\,\,\, \dot v_y &&= -g \label{eq:continuous-time-vel}
\end{alignat}

\noindent Using $h_x(t_k^+)$, $h_y(t_k^+)$, $v_x(t_k^+)$, and $v_y(t_k^+)$ from \eqref{eq:impulsive-posn}, \eqref{eq:impulsive-vel-x}, and \eqref{eq:impulsive-vel-y} as initial conditions, the solutions to \eqref{eq:continuous-time-pos} and \eqref{eq:continuous-time-vel} are obtained as
\begin{subequations} \label{eq:pos-soln}
\begin{align}
    h_x(t_{k+1}^-) &= h_x(t_k^-) + v_x(t_k^-) \delta_k - \frac{I_k}{m} \sin\theta_k \delta_k \\
    h_y(t_{k+1}^-) &= h_y(t_k^-) + v_y(t_k^-) \delta_k + \frac{I_k}{m} \cos\theta_k \delta_k - \frac{1}{2}g \delta_k^2
\end{align}
\end{subequations}
\begin{subequations} \label{eq:vel-soln}
\begin{align}
    v_x(t_{k+1}^-) &= v_x(t_k^-) - \frac{I_k}{m} \sin\theta_k \label{eq:vel-soln-x} \\
    v_y(t_{k+1}^-) &= v_y(t_k^-) + \frac{I_k}{m} \cos\theta_k - g \delta_k \label{eq:vel-soln-y}
\end{align}
\end{subequations}

\noindent where the time-of-flight $\delta_k \triangleq (t_{k+1}^- - t_k^-)$ is the interval between application of the $k$-th and $(k+1)$-th impulsive inputs. Since angular momentum is conserved, we have
\begin{equation}
    \dot\theta = \omega, \quad J \dot\omega = 0 \quad \Rightarrow \quad \dot\omega = 0
\end{equation}

\noindent Using $\theta_k$ and $\omega(t_k^+)$ from \eqref{eq:impulsive-posn} and \eqref{eq:impulsive-ang-vel} as initial conditions, the above equations give
\begin{align}
    \theta_{k+1} &= \theta_k +  \omega(t_k^-) \delta_k + \frac{I_k r_k}{J} \delta_k \label{eq:theta-sol} \\
    \omega(t_{k+1}^-) &= \omega(t_k^-) + \frac{I_k r_k}{J} \label{eq:omega-sol}
\end{align}

\subsection{Hybrid Dynamic Model} \label{sec23}

The hybrid dynamic model captures the dynamics of the system in the interval $t \in [t_k^-, t_{k+1}^-]$ and encompasses a single impulsive actuation at time $t_k$. Using $k$ in place of $t_k^-$ to represent instants immediately prior to application of impulsive inputs, \eqref{eq:pos-soln} and \eqref{eq:vel-soln} can be rewritten as
\begin{equation} \label{eq:hybrid-pos}
    \qac(k+1) = \qac(k) + \dqac(k) \delta_k 
    + \begin{bmatrix} -\sin\theta_k \\ \cos\theta_k \end{bmatrix} \frac{I_k \delta_k}{m}
    + \begin{bmatrix} 0 \\ - \frac{1}{2}g \delta_k^2 \end{bmatrix}
\end{equation}
\begin{equation} \label{eq:hybrid-vel}
    \dqac(k+1) = \dqac(k)
    + \begin{bmatrix} -\sin\theta_k \\ \cos\theta_k \end{bmatrix} \frac{I_k}{m}
    + \begin{bmatrix} 0 \\ - g \delta_k \end{bmatrix}
\end{equation}

\noindent Since impulsive inputs are only applied when
\begin{equation} \label{eq:hybrid-theta-delta}
    \theta_{k+1} - \theta_k = \delths
\end{equation}

\noindent equations \eqref{eq:theta-sol} and \eqref{eq:omega-sol} may be rewritten as
\begin{align}
    \delths &= \omega_k \delta_k + \frac{I_k r_k}{J} \delta_k \label{eq:hybrid-theta} \\
    \omega_{k+1} &= \omega_k + \frac{I_k r_k}{J} \label{eq:hybrid-omega}
\end{align}

\noindent It follows from \eqref{eq:hybrid-theta} and \eqref{eq:hybrid-omega} that the time-of-flight is
\begin{equation} \label{eq:time-of-flight}
    \delta_k = \frac{\delths}{\omega_k + \dfrac{I_k r_k}{J}} \quad \Rightarrow \quad \delta_k = \frac{\delths}{\omega_{k+1}}
\end{equation}

\noindent The hybrid dynamics of the system is completely described by \eqref{eq:hybrid-pos}, \eqref{eq:hybrid-vel}, \eqref{eq:hybrid-theta-delta}, and \eqref{eq:hybrid-omega}, with $\delta_k$ given by \eqref{eq:time-of-flight}.

\section{Discrete Virtual Holonomic Constraints} \label{sec3}

\subsection{Virtual Holonomic Constraints at Discrete Instants} \label{sec31}

\begin{figure}[b!]
    \centering
    \psfrag{A}{$x$}
    \psfrag{B}{$y$}
    \psfrag{C}{$R$}
    \psfrag{D}{$\theta_k$}
    \psfrag{E}{$\phi$}
    \psfrag{F}{\small{$\theta_{k+1}$}}
    \psfrag{K}{$g$}
    \psfrag{P}{$I$}
    \psfrag{Q}{$r$}
    \includegraphics[width=0.58\linewidth]{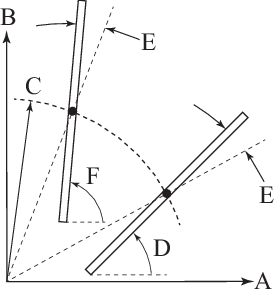}
    \caption{DVHC describing the desired trajectory of the center-of-mass of the devil-stick at $t = t_k$ and $t = t_{k+1}$ showing the parameters $R$ and $\phi$.}
    \label{fig:vhc}
\end{figure}

In contrast to VHCs, which are relations in the coordinates $q$ made invariant by feedback in continuous-time, we introduce the notion of discrete virtual holonomic constraints.
\begin{definition}
\label{DVHC}
     A \emph{discrete virtual holonomic constraint} (DVHC) is a relation in the coordinates $q$ enforced by discrete feedback such that the relation is satisfied at the instants $t = t_k$, $k = 1, 2, \dots$, but not necessarily for $t \in [t_k^+, t_{k+1}^-]$.
\end{definition}

We use Definition \ref{DVHC} to define the trajectory of the center-of-mass of the devil-stick in terms of its orientation:
\begin{equation} \label{eq:VHC-rho}
    \rho_k \triangleq \qac(k) - \Phi(\theta_k) = 0, \quad \Phi: \mathbb{R} \rightarrow \mathbb{R}^2, \ k = 1,2, \dots 
\end{equation}

\noindent where $\Phi: \mathbb{R} \rightarrow \mathbb{R}^2$ is given by 
\begin{equation} \label{eq:VHC-Phi}
    \Phi(\theta_k) = \begin{bmatrix}
        R \cos(\theta_k - \phi) \\
        R \sin(\theta_k - \phi)
    \end{bmatrix}
\end{equation}

\noindent Physically, this DVHC ensures that the center-of-mass of the devil-stick lies on a circle of radius $R$, $R > 0$, centered at the origin of the $xy$ frame, at the instants $t = t_k$, $k = 1, 2, \dots$, as shown in Fig.\ref{fig:vhc}. The parameter $\phi$, $\phi \in (-\pi, \pi]$, denotes an offset between the angle subtended by the center-of-mass $G$ at the origin, and the orientation $\theta$. Note that the function $\Phi$ in \eqref{eq:VHC-Phi} is identical to the VHC in \cite{khandelwal_propeller_2025}, used to define the trajectory for continuous-time propeller motion of a devil-stick.\

\begin{remark}
    Since the DVHC is only satisfied at $t = t_k$, $k = 1, 2, \dots$, is assumed that the system does not have finite escape times in the interval $t \in [t_k^+, t_{k+1}^-]$.
    This assumption holds for the system considered in this paper, as the devil-stick falls freely under gravity when $t \in [t_k^+, t_{k+1}^-]$ - see Section \ref{sec222}.
\end{remark}

\begin{remark} \label{rem-vhc}
    A VHC defines a \emph{constraint manifold} \cite{maggiore_virtual_2013, kant_orbital_2020} on which the tangent space is defined by the velocities $\dqac = \dot \Phi$ for all time; the constraints on the active velocities are therefore independent of the system dynamics. The same is not true for a DVHC; knowledge of the system dynamics is necessary to derive the constraints on the active velocities at $t = t_k^-$ that correspond to the DVHC.
\end{remark}

We now derive the constraints on the active velocities corresponding to the DVHC in \eqref{eq:VHC-rho} and \eqref{eq:VHC-Phi}, based on knowledge of the hybrid dynamics from Section \ref{sec23}.

\subsection{Constraints on Velocities} \label{sec32}

We begin by rewriting \eqref{eq:hybrid-pos} as
\begin{equation}
\begin{split}
    \qac(k+1) - \qac(k) = \left\{\dqac(k)
    + \begin{bmatrix} -\sin\theta_k \\ \cos\theta_k \end{bmatrix} \frac{I_k}{m} \right. \\ \left. 
    + \begin{bmatrix} 0 \\ - g \delta_k \end{bmatrix}\right\} \delta_k
    + \begin{bmatrix} 0 \\ \frac{1}{2}g \delta_k^2 \end{bmatrix}
\end{split}
\end{equation}

\noindent Using \eqref{eq:hybrid-vel} in the above equation, we obtain
\begin{equation} \label{eq:hybrid-pos-vel}
    \qac(k+1) - \qac(k) = \dqac(k+1) \delta_k 
    + \begin{bmatrix} 0 \\ \frac{1}{2}g \delta_k^2 \end{bmatrix}
\end{equation}

\noindent Now, using the DVHC from \eqref{eq:VHC-rho} at $k$ and $k+1$ in the above equation, we obtain
\begin{equation}
    \Phi(\theta_{k+1}) - \Phi(\theta_k) = \dqac(k+1) \delta_k 
    + \begin{bmatrix} 0 \\ \frac{1}{2}g \delta_k^2 \end{bmatrix}
\end{equation}

\noindent It follows from \eqref{eq:hybrid-theta-delta} that $\theta_k = \theta_{k+1} - \delths$, which can be used to rewrite the above equation as
\begin{align}
    \Phi(\theta_{k+1}) - \Phi(\theta_{k+1} - \delths) = \dqac(k+1) \delta_k 
    &+ \begin{bmatrix} 0 \\ \frac{1}{2}g \delta_k^2 \end{bmatrix} \notag \\ \Rightarrow
    \dqac(k+1) = \frac{1}{\delta_k} [\Phi(\theta_{k+1}) - \Phi(\theta_{k+1} - \delths)] 
    &- \begin{bmatrix} 0 \\ \frac{1}{2}g \delta_k \end{bmatrix} \label{eq:vhc-vel-time-of-flight}
\end{align}

\noindent Using \eqref{eq:time-of-flight}, the above equation can be rewritten as
\begin{equation} 
    \dqac(k+1) = \frac{\omega_{k+1}}{\delths} [\Phi(\theta_{k+1}) - \Phi(\theta_{k+1} - \delths)] 
    - \begin{bmatrix} 0 \\ \dfrac{g \delths}{2 \omega_{k+1}} \end{bmatrix}
\end{equation}

\noindent Replacing all instances of $k+1$ by $k$ in the above equation, we obtain
\begin{equation} 
    \dqac(k) = \frac{\omega_{k}}{\delths} [\Phi(\theta_k) - \Phi(\theta_k - \delths)] 
    - \begin{bmatrix} 0 \\ \dfrac{g \delths}{2 \omega_k} \end{bmatrix}
\end{equation}

\noindent The constraint on the active velocities at the instants $t = t_k^-$ can therefore be described by the relation
\begin{equation} \label{eq:VHC-Drho}
    \drho_k \triangleq \dqac(k) - \Psi(\theta_k, \omega_k) = 0, \quad k = 1,2, \dots 
\end{equation}

\noindent where $\Psi$ is given by
\begin{equation} \label{eq:VHC-Psi-Phi}
    \Psi(\theta_k, \omega_k) = 
    \frac{\omega_{k}}{\delths} [\Phi(\theta_k) - \Phi(\theta_k - \delths)] 
    - \begin{bmatrix} 0 \\ \dfrac{g \delths}{2 \omega_k} \end{bmatrix}
\end{equation}

\noindent Using \eqref{eq:VHC-Phi} and the trigonometric identities for the differences of cosines and sines, the above equation may be rewritten as
\begin{equation} \label{eq:VHC-Psi}
\begin{split}
    \Psi(\theta_k, \omega_k) = \frac{\omega_{k}}{\delths} &\begin{bmatrix}
    - 2 R \sin\left(\dfrac{\delths}{2}\right) \sin\left(\theta_k - \phi - \dfrac{\delths}{2}\right) \\
    2 R \sin\left(\dfrac{\delths}{2}\right) \cos\left(\theta_k - \phi - \dfrac{\delths}{2}\right) 
    \end{bmatrix} \\
    - &\begin{bmatrix} 0 \\ \dfrac{g \delths}{2 \omega_k} \end{bmatrix}
\end{split}
\end{equation}

\noindent Similar to a VHC, the above constraint on the active velocities corresponding to the DVHC depends on the passive coordinate $\theta$ and the passive velocity $\omega$. 
It can be verified from \eqref{eq:VHC-Psi-Phi} that
\begin{equation}
    \lim_{\delths \rightarrow 0} \Psi = \left[\frac{\partial \Phi}{\partial \theta}\right]\omega = \dot\Phi
\end{equation}

\noindent which shows that the constraint on the active velocities reduces to the equivalent expression for VHCs in the limit $\delths \to 0$.

\subsection{Control Design Enforcing the DVHC} \label{sec33}

We now solve for the control inputs $I_k$ and $r_k$ that drive system trajectories exponentially to $\rho_k = 0$, 
first considering $\delta_k$ as an input for simplicity. Using \eqref{eq:VHC-rho} and \eqref{eq:VHC-Drho} in \eqref{eq:hybrid-pos}, we obtain
\begin{equation} \label{eq:hybrid-pos-rho}
\begin{split}
    \rho_{k+1} + \Phi(\theta_{k+1}) = &[\rho_k + \Phi(\theta_k)] + [\drho_k + \Psi(\theta_k, \omega_k)] \delta_k \\
    &+ \begin{bmatrix} -\sin\theta_k \\ \cos\theta_k \end{bmatrix} \frac{I_k \delta_k}{m}
    + \begin{bmatrix} 0 \\ -\frac{1}{2} g \delta_k^2 \end{bmatrix}
\end{split}
\end{equation}

\noindent To ensure that $\rho_k \to 0$ as $k \to \infty$ exponentially, we choose
\begin{equation} \label{eq:rho-lambda}
    \rho_{k+1} = \lambda \rho_k, \quad \lambda \triangleq \mathrm{diag}\begin{bmatrix}
        \lambda_x & \lambda_y
    \end{bmatrix},\, \lambda_x, \lambda_y \in [0,1)
\end{equation}

\noindent Note from \eqref{eq:hybrid-theta-delta} that $\Phi(\theta_{k+1}) = \Phi(\theta_k + \delths)$, which is known at $k$. We define $\eta_k \triangleq \Phi(\theta_k + \delths) - \Phi(\theta_k)$ and rewrite \eqref{eq:hybrid-pos-rho} as
\begin{equation} \label{eq:control-rho}
\begin{split}
    (\lambda - 1) \rho_k + \eta_k = &[\drho_k + \Psi(\theta_k, \omega_k)] \delta_k \\
    &+ \begin{bmatrix} -\sin\theta_k \\ \cos\theta_k \end{bmatrix} \frac{I_k \delta_k}{m}
    + \begin{bmatrix} 0 \\ -\frac{1}{2} g \delta_k^2 \end{bmatrix}
\end{split}
\end{equation}

\noindent The $x$ and $y$ components of the above equation can be expressed separately as
\begin{subequations} \label{eq:control-rho-xy}
\begin{align}
    (\lambda_x-1) \rho_{xk} + \eta_{xk} &- [\drho_{xk} + \Psi_x(\theta_k, \omega_k)] \delta_k \notag \\
    &= - \frac{I_k \delta_k}{m} \sin\theta_k \label{eq:control-rho-x} \\
    (\lambda_y-1) \rho_{yk} + \eta_{yk} &- [\drho_{yk} + \Psi_y(\theta_k, \omega_k)] \delta_k + \frac{1}{2} g \delta_k^2 \notag \\
    &= \frac{I_k \delta_k}{m} \cos\theta_k \label{eq:control-rho-y}
\end{align}
\end{subequations}

\noindent where $\rho_k = \begin{bmatrix} \rho_{xk} & \rho_{yk} \end{bmatrix}^T$, $\drho_k = \begin{bmatrix} \drho_{xk} & \drho_{yk} \end{bmatrix}^T$, $\eta_k = \begin{bmatrix} \eta_{xk} & \eta_{yk} \end{bmatrix}^T$ and $\Psi = \begin{bmatrix} \Psi_{x} & \Psi_{y} \end{bmatrix}^T$. When $\sin\theta_k = 0$, \eqref{eq:control-rho-x} can be solved for $\delta_k$ as
\begin{subequations} \label{eq:delta-k}
\begin{equation} \label{eq:delta-k-sin-thk-zero}
    \delta_k = \frac{(\lambda_x-1) \rho_{xk} + \eta_{xk}}{\drho_{xk} + \Psi_x(\theta_k, \omega_k)}
\end{equation}

\noindent When $\sin\theta_k \neq 0$, we can eliminate $I_k$ between \eqref{eq:control-rho-x} and \eqref{eq:control-rho-y} to obtain the following quadratic equation in $\delta_k$:
\begin{equation} \label{eq:delta-k-quad}
\begin{split}
    \frac{1}{2} g \delta_k^2 &- \left\{[\drho_{xk} + \Psi_x(\theta_k, \omega_k)] \cot\theta_k \right. \\
    &\left. + [\drho_{yk} + \Psi_y(\theta_k, \omega_k)] \right\} \delta_k 
    + \left[\eta_{xk} \cot\theta_k + \eta_{yk} \right. \\
    &\left.+ (\lambda_x-1) \rho_{xk} \cot\theta_k + (\lambda_y-1) \rho_{yk} \right] = 0
\end{split}
\end{equation}
\end{subequations}

\noindent which can be solved for $\delta_k$. When $\theta_k \!\!\mod 2\pi \in (0, \pi)$, we choose the `negative' square root solution to the quadratic, and when $\theta_k \!\!\mod 2\pi \in (\pi, 2\pi)$, we choose the `positive' square root solution to the quadratic\footnote{This result is consistent across extensive simulations; it ensures that the value of $\delta_k > 0$ and that it is closer in magnitude to $\delta_{k-1}$.}. 
Having computed $\delta_k$, we use \eqref{eq:control-rho-xy} to solve for the value of $I_k$. When $\sin\theta_k = 0$, we have from \eqref{eq:control-rho-y}:
\begin{subequations} \label{eq:Ik}
\begin{equation}
    I_k = \frac{m \left\{ (\lambda_y-1) \rho_{yk} + \eta_{yk} - [\drho_{yk} + \Psi_y(\theta_k, \omega_k)] \delta_k + \frac{1}{2} g \delta_k^2 \right\}}{\delta_k \cos\theta_k}
\end{equation}

\noindent and when $\sin\theta_k \neq 0$, we have from \eqref{eq:control-rho-x}:
\begin{equation}
    I_k = -\frac{m \left\{ (\lambda_x-1) \rho_{xk} + \eta_{xk} - [\drho_{xk} + \Psi_x(\theta_k, \omega_k)] \delta_k \right\}}{\delta_k \sin\theta_k}
\end{equation}
\end{subequations}

\noindent Finally, having obtained $\delta_k$ and $I_k$, we can find the value of $r_k$ using \eqref{eq:time-of-flight} as
\begin{equation} \label{eq:rk}
    r_k = \frac{J \delths}{I_k \delta_k} - \frac{J \omega_k}{I_k}
\end{equation}

\begin{remark}
    It must be ensured that the time-of-flight $\delta_k > 0$, and the point of application of the impulse lies on the devil-stick, \emph{i.e.}, $r_k \in (-\ell/2, \ell/2)$. 
\end{remark}

\noindent The control inputs from \eqref{eq:Ik} and \eqref{eq:rk} drive the system trajectories to $\rho_k = 0$. We now show that as $\rho_k \to 0$, $\drho_k \to 0$. Using \eqref{eq:VHC-rho} and \eqref{eq:VHC-Drho}, \eqref{eq:hybrid-pos-vel} can be rewritten 
\begin{equation}
\begin{split}
    &\rho_{k+1} - \rho_k + \Phi(\theta_{k+1}) - \Phi(\theta_k) \\
    = [&\drho_{k+1} + \Psi(\theta_{k+1}, \omega_{k+1})] \delta_k 
    + \begin{bmatrix} 0 \\ \frac{1}{2}g \delta_k^2 \end{bmatrix}
\end{split}
\end{equation}

\noindent Dividing both sides by $\delta_k$,
\begin{equation} \label{eq:Drho-control}
\begin{split}
    &\frac{\rho_{k+1} - \rho_k}{\delta_k} + \frac{\Phi(\theta_{k+1}) - \Phi(\theta_k)}{\delta_k} \\
    = &\drho_{k+1} + \Psi(\theta_{k+1}, \omega_{k+1}) 
    + \begin{bmatrix} 0 \\ \frac{1}{2}g \delta_k \end{bmatrix}
\end{split}
\end{equation}

\noindent It follows from \eqref{eq:VHC-Psi-Phi} that
\begin{equation*}
    \Psi(\theta_{k+1}, \omega_{k+1}) = 
    \frac{\omega_{k+1}}{\delths} [\Phi(\theta_{k+1}) - \Phi(\theta_k)] 
    - \begin{bmatrix} 0 \\ \dfrac{g \delths}{2 \omega_{k+1}} \end{bmatrix}
\end{equation*}

\noindent which, using \eqref{eq:time-of-flight} can be rewritten
\begin{equation}
    \Psi(\theta_{k+1}, \omega_{k+1}) = \frac{\Phi(\theta_{k+1}) - \Phi(\theta_k)}{\delta_k}
    - \begin{bmatrix} 0 \\ \frac{1}{2}g \delta_k \end{bmatrix}
\end{equation}

\noindent Using the above equation in \eqref{eq:Drho-control}, we obtain
\begin{equation} \label{eq:drho-lambda}
    \drho_{k+1} = \frac{\rho_{k+1} - \rho_k}{\delta_k} \quad \Rightarrow \quad \drho_{k+1} = \frac{(\lambda - 1)\rho_k}{\delta_k}
\end{equation}

\noindent which shows that $\drho_k \to 0$ as $\rho_k \to 0$, and the control design presented here enforces the DVHC.

\begin{remark}
    If $\lambda_x = \lambda_y = 0$, it follows from \eqref{eq:rho-lambda} that the control design presented in this section achieves dead-beat convergence to $\rho_k = 0$; consequently, it follows from \eqref{eq:drho-lambda} that $\drho_k$ is driven to zero in two steps.
\end{remark}

\section{Discrete Zero Dynamics} \label{sec4}

We now seek the dynamics of the passive coordinate $\theta$ and passive velocity $\omega$ when the DVHCs are satisfied and the system trajectories evolve such that $\rho_k \equiv 0$ and $\drho_k \equiv 0$:
\begin{equation} \label{eq:h-v-zero-dyn}
    \qac (k) \equiv \Phi(\theta_k), \quad \dqac (k) \equiv \Psi(\theta_k, \omega_k)
\end{equation}

\noindent From \eqref{eq:VHC-Psi}, it follows
\begin{subequations} \label{eq:VHC-Psi-k}
\begin{align}
    v_x (k) &= - \frac{2 R \omega_k}{\delths} \sin\left(\frac{\delths}{2}\right) \sin\left(\theta_k - \phi - \frac{\delths}{2}\right) \\
    v_y (k) &= \frac{2 R \omega_k}{\delths} \sin\left(\frac{\delths}{2}\right) \cos\left(\theta_k - \phi - \frac{\delths}{2}\right) - \frac{g \delths}{2 \omega_k}
\end{align}
\end{subequations}

\noindent Similarly, using \eqref{eq:hybrid-theta-delta} in \eqref{eq:VHC-Psi}, it follows
\begin{subequations} \label{eq:VHC-Psi-kp1}
\begin{align}
    v_x (k+1) &= - \frac{2 R \omega_{k+1}}{\delths} \sin\left(\frac{\delths}{2}\right) \sin\left(\theta_k - \phi + \frac{\delths}{2}\right) \\
    v_y (k+1) &= \frac{2 R \omega_{k+1}}{\delths} \sin\left(\frac{\delths}{2}\right) \cos\left(\theta_k - \phi + \frac{\delths}{2}\right) \notag \\ 
    & \quad\  - \frac{g \delths}{2 \omega_{k+1}}
\end{align}
\end{subequations}

\noindent Multiplying both sides of \eqref{eq:vel-soln-x} by $\cos\theta_k$, both sides of \eqref{eq:vel-soln-y} by $\sin\theta_k$, and adding the resulting equations we can eliminate the input $I_k$ between them to obtain
\begin{equation} 
\begin{split}
    &v_x(k+1) \cos\theta_k + v_y(k+1) \sin\theta_k \\= &v_x(k) \cos\theta_k + v_y(k) \sin\theta_k - g \delta_k \sin\theta_k
\end{split}
\end{equation}

\noindent Using \eqref{eq:time-of-flight}, the above equation may be rewritten as
\begin{equation} \label{eq:eliminate-Ik}
\begin{split}
    &v_x(k+1) \cos\theta_k + v_y(k+1) \sin\theta_k \\= &v_x(k) \cos\theta_k + v_y(k) \sin\theta_k - \frac{g \delths}{\omega_{k+1}} \sin\theta_k
\end{split}
\end{equation}

\noindent Using \eqref{eq:VHC-Psi-k} and \eqref{eq:VHC-Psi-kp1} in \eqref{eq:eliminate-Ik} and simplifying using the trigonometric identity for the sine of a difference, we obtain
\begin{equation} \label{eq:omkp1-zero-dyn}
    \Sm \omega_{k+1} - \Sp \omega_k = - K \sin\theta_k \left[ \frac{1}{\omega_k} + \frac{1}{\omega_{k+1}} \right]
\end{equation}

\noindent where
\begin{align}
    K &\triangleq \frac{g \delths^2}{4 R \sin{(\delths/2)}},  \quad K \geq 0 \,\, \forall \delths \in (0, \pi) \notag \\ 
    \Sm &\triangleq \sin\left(\phi - \frac{\delths}{2}\right), \, \Sp \triangleq \sin\left(\phi + \frac{\delths}{2}\right)
\end{align}

\noindent From \eqref{eq:hybrid-theta-delta} and \eqref{eq:omkp1-zero-dyn}, the zero dynamics of the system is given by the two-dimensional nonlinear map:
\begin{align} 
&\mathcal{Z}(\theta_k, \omega_k, \theta_{k+1}, \omega_{k+1}) = 0 \notag \\
&\mathcal{Z} \triangleq \begin{bmatrix} \theta_{k+1} - \theta_k - \delths \\
\Sm \omega_{k+1} - \Sp \omega_k + K \sin\theta_k \left[ \dfrac{1}{\omega_k} + \dfrac{1}{\omega_{k+1}} \right]\end{bmatrix} \label{eq:zerodyn}
\end{align}

\noindent which describes the behavior of the system when \eqref{eq:h-v-zero-dyn} is satisfied. We term the above equation the \emph{discrete zero dynamics} (DZD). An explicit solution for $\omega_{k+1}$ in terms of $\theta_k$ and $\omega_k$ is obtained in Appendix \ref{appendix:zero-dyn-sol}.

\begin{remark} \label{rem:solution-set}
    The zero dynamics induced by VHCs completely describe the evolution of the system trajectories given initial conditions that lie on the constraint manifold \cite{maggiore_virtual_2013, kant_orbital_2020}.
    Similarly, for DVHCs, the DZD defines a solution set which completely captures the evolution of system trajectories given initial conditions that satisfy the DVHC in \eqref{eq:VHC-rho} and the corresponding constraint on velocities in \eqref{eq:VHC-Drho}.
\end{remark}

\begin{remark}
    The discrete zero dynamics (DZD) presented here is different from the hybrid zero dynamics (HZD) \cite{westervelt_hybrid_2003, grizzle_virtual_2019, kao-vukovich_synthesis_2023}, which has been studied for bipedal robots subject to a VHC during their swing phase. The HZD is the combination of a flow describing the dynamics of the passive coordinate on the constraint manifold, and an \emph{impact map} describing the jump in the passive coordinate and velocity when the system trajectory intersects the \emph{switching surface}.
    In contrast, the DZD is solely a map which completely describes the evolution of the passive coordinate and velocity for a hybrid system subject to a DVHC.
\end{remark}

We investigate the DZD \eqref{eq:zerodyn} in the limit $\delths \to 0$. From \eqref{eq:time-of-flight}, $\delths \to 0 \Rightarrow \delta_k \to 0$. It follows from \eqref{eq:hybrid-theta-delta} and \eqref{eq:time-of-flight} that
\begin{equation} \label{eq:lim-theta}
\frac{\theta_{k+1} - \theta_k}{\delta_k} = \omega_{k+1} \quad
\Rightarrow \qquad \lim_{\delta_k \to 0} \frac{\theta_{k+1} - \theta_k}{\delta_k} = \dot \theta = \omega 
\end{equation}

\noindent Now, observe that \eqref{eq:omkp1-zero-dyn} can be rewritten as
\begin{equation} 
\begin{split}
       &[\omega_{k+1} - \omega_k] \sin\phi \cos (\delths/2) \\
     - &[\omega_{k+1} + \omega_k] \cos\phi \sin (\delths/2) \\
     &=  - \frac{g \delths^2}{4 R \sin{(\delths/2)}}\sin\theta_k \left[ \frac{1}{\omega_k} + \frac{1}{\omega_{k+1}} \right]
\end{split}
\end{equation}

\noindent Dividing both sides by $\delta_k$, which from \eqref{eq:time-of-flight} can alternatively be expressed as $\delths/\omega_{k+1}$, and simplifying we obtain
\begin{equation} \label{eq:omkp1-zero-dyn-rewrite}
\begin{split}
       &\frac{\omega_{k+1} - \omega_k}{\delta_k} \sin\phi \cos (\delths/2) \\
     - &\frac{\omega_{k+1}[\omega_{k+1} + \omega_k]}{2} \cos\phi \frac{\sin (\delths/2)}{(\delths/2)} \\
     &= - \frac{g (\delths/2)}{R \sin{(\delths/2)}}\sin\theta_k \times \frac{1}{2}\left[ \frac{\omega_{k+1}}{\omega_k} + 1 \right]
\end{split}
\end{equation}

\noindent Observe that 
\begin{equation}
    \lim_{\delta_k \to 0} \frac{\omega_{k+1} - \omega_k}{\delta_k} = \dot\omega, \quad 
    \lim_{\delths \to 0} \frac{\sin (\delths/2)}{(\delths/2)} = 1
\end{equation}

\noindent and in the limit $\delths \to 0$, $\omega_{k+1} \approx \omega_k$, \emph{i.e.}
\begin{equation*}
    \frac{\omega_{k+1}[\omega_{k+1} + \omega_k]}{2} \approx \omega^2, \quad
    \frac{1}{2}\left[ \frac{\omega_{k+1}}{\omega_k} + 1 \right] \approx 1
\end{equation*}

\noindent Thus, in the limit $\delths \to 0$, \eqref{eq:omkp1-zero-dyn-rewrite} becomes
\begin{equation}
    \dot \omega \sin\phi - \omega^2 \cos\phi = - \frac{g}{R} \sin\theta
\end{equation}

\noindent Using \eqref{eq:lim-theta}, the above equation can be rewritten as
\begin{equation} \label{eq:lim-zero-dyn}
    \ddot \theta = -\frac{g \sin\theta}{R \sin\phi} + \cot\phi \dot\theta^2
\end{equation}

\noindent which is equivalent to the zero dynamics \cite{khandelwal_propeller_2025} for continuous-time propeller motion using normal forcing. 

If $\phi \in \{-\pi/2, \pi/2\}$, \eqref{eq:omkp1-zero-dyn} can be simplified to the form
\begin{equation} \label{eq:omkp1-zero-dyn-piby2}
    \omega_{k+1} - \omega_k = P \sin\theta_k \left[ \frac{1}{\omega_k} + \frac{1}{\omega_{k+1}} \right]
\end{equation}

\noindent where
\begin{equation}
    P \triangleq \begin{cases}
         &\,\,\,\,\, \dfrac{g \delths^2}{2 R \sin{(\delths)}}, \quad \phi = -\pi/2 \\
        &- \dfrac{g \delths^2}{2 R \sin{(\delths)}}, \quad \phi = \pi/2
    \end{cases}
\end{equation}

\noindent Therefore, it follows from \eqref{eq:hybrid-theta-delta} and \eqref{eq:omkp1-zero-dyn-piby2} that the DZD in \eqref{eq:zerodyn} takes the form
\begin{align}
&\mathcal{Z}(\theta_k, \omega_k, \theta_{k+1}, \omega_{k+1}) = 0 \notag \\
&\mathcal{Z} \triangleq \begin{bmatrix} \theta_{k+1} - \theta_k - \delths \\
\omega_{k+1} - \omega_k - P \sin\theta_k \left[ \dfrac{1}{\omega_k} + \dfrac{1}{\omega_{k+1}} \right]\end{bmatrix}  \label{eq:zerodyn-special}
\end{align}

It follows from \eqref{eq:lim-zero-dyn} that the DZD \eqref{eq:zerodyn-special} in the limit $\delths \to 0$ takes the form \cite{khandelwal_propeller_2025}
\begin{equation} \label{eq:lim-zero-dyn-piby2}
    \ddot \theta = \begin{cases}
        &\,\,\,\,\, \dfrac{g \sin \theta}{R}, \quad \phi = -\pi/2 \\[2ex]
        &- \dfrac{g \sin \theta}{R}, \quad \phi =  \pi/2
    \end{cases}
\end{equation}

\begin{remark}
    The behavior of the zero dynamics \eqref{eq:lim-zero-dyn} for continuous-time propeller motion is well understood \cite{khandelwal_propeller_2025}; periodic solutions are obtained only when $\phi \in \{-\pi/2, \pi/2\}$, \emph{i.e.}, when the zero dynamics is described by \eqref{eq:lim-zero-dyn-piby2}. 
    Similarly, the DZD \eqref{eq:zerodyn} permits periodic solutions only when $\phi \in \{-\pi/2, \pi/2\}$, \emph{i.e.}, when it reduces to \eqref{eq:zerodyn-special}.
    An in-depth analytical and numerical investigation of the DZD is provided in Appendix \ref{appendix:zero-dyn}, which also discusses additional conditions for periodic behavior.
\end{remark}

In particular, periodic solutions to \eqref{eq:zerodyn-special} are obtained only when $\delths = (p/q) 2\pi$, $p, q \in Z^+$, and $\theta_k \!\!\mod 2\pi \in \{0, \delths/2\}$ for some $k$. This is proved for the special case $\delths = 2\pi/N$, $N \in Z^+$ - see Theorem \ref{thm:periodic} in Appendix \ref{sec44}.
We now describe a procedure to stabilize a desired orbit when $\phi = \pm \pi/2$.

\section{Stabilization of Propeller Motion} \label{sec5}

\subsection{Orbit Selection}

A distinct discrete propeller motion described by the DVHCs in \eqref{eq:VHC-Phi} and corresponding constraints on velocity in \eqref{eq:VHC-Psi}, with $\phi \in \{-\pi/2, \pi/2\}$ and $\delths = 2\pi/N$ (see \eqref{eq:delths} in Appendix \ref{sec44}) is the orbit
\begin{equation} \label{eq:orbit}
\begin{split}
    \mathcal{O}^* = \{(q, \dot q) : \qac(k) &= \Phi(\theta_k), \dqac(k) = \Psi(\theta_k, \omega_k), \\
    \omega_k &= \omega^* \,\,\forall\, \theta_k \!\!\!\!\mod 2\pi = \theta^* \}
\end{split}
\end{equation}

\noindent with $\omega^*$ chosen such that Assumption \ref{asm:omk-min-periodic} (in Appendix \ref{sec44}) is satisfied. To realize stabilization of $\mathcal{O}^*$, we define the section
\begin{equation} \label{eq:poincare-section}
    \Sigma = \{(q, \dot q) \in \mathbb{R}^6 : \theta \!\!\!\! \mod 2\pi = \theta^*, \omega > 0\}
\end{equation}

\noindent on which the states are
\begin{equation}
    z = \begin{bmatrix} \qac^T & \dqac^T &\omega \end{bmatrix}^T, \quad z \in \mathbb{R}^5
\end{equation}

\noindent For a system trajectory not on $\mathcal{O}^*$, the inputs $I$ and $r$ applied when $\theta_k \!\!\mod 2\pi = \theta^*$ are modified from their values in \eqref{eq:Ik} and \eqref{eq:rk} to drive the trajectory to $\mathcal{O}^*$. Since $\theta_{k + N} \!\!\mod 2\pi = \theta_k \!\!\mod 2\pi$, the hybrid dynamics between successive intersections of the system trajectory with $\Sigma$, which involves $N$ applications of the impulsive force, may be expressed by
\begin{equation} \label{eq:hybrid-map}
    z(j+1) = \mathbb{P} [z(j), I(j), r(j)]
\end{equation}

\noindent where $z(j)$ denotes the states on $\Sigma$ immediately prior to application of the control inputs $I(j)$ and $r(j)$ on $\Sigma$. It should be noted that for every intersection of the system trajectory with $\Sigma$, the increment in $k$ is $N$ whereas the increment in $j$ is $1$.

\subsection{Orbital Stabilization}

If $(q, \dot q) \in \mathcal{O}^*$, the system trajectory evolves on $\mathcal{O}^*$ under the action of the control inputs from \eqref{eq:Ik} and \eqref{eq:rk}. Therefore, the intersection of $\mathcal{O}^*$ with $\Sigma$ is a fixed point $z(j) = z^*$, $I(j) = I^*$, $r(j) = r^*$ of the map $\mathbb{P}$, 
where $I^*$ and $r^*$ are obtained from \eqref{eq:Ik} and \eqref{eq:rk} respectively with $\theta_k = \theta^*$, $\omega_k = \omega^*$, $\rho_k = 0$, and $\drho_k = 0$:
\begin{equation} \label{eq:fixed-point}
    z^* = \mathbb{P} (z^*, I^*, r^*)
\end{equation}

\noindent If $(q, \dot q) \notin \mathcal{O}^*$, the inputs from \eqref{eq:Ik} and \eqref{eq:rk} ensure that $\rho_k \to 0$, but system trajectories do not necessarily converge to $\mathcal{O}^*$. We apply the ICPM approach to stabilize the fixed point $z^*$ on $\Sigma$, and consequently the orbit $\mathcal{O}^*$; we linearize $\mathbb{P}$ about $z(j) = z^*$ and $I(j) = I^*,\, r(j) = r^*$:
\begin{align} 
    e(j+1) &= \mathcal{A} e(j) + \mathcal{B} u(j) \label{eq:linearized-map} \\ 
    e(j) &\triangleq z(j) - z^*, \quad 
    u(j) \triangleq \begin{bmatrix} I(j) \\ r(j) \end{bmatrix} - \begin{bmatrix} I^* \\ r^* \end{bmatrix}
\end{align} 

\noindent where $\mathcal{A} \in \mathbb{R}^{5 \times 5}$, $\mathcal{B} \in \mathbb{R}^{5 \times 2}$ are computed numerically.
The $i$-th column $\mathcal{A}_i$ of $\mathcal{A}$ is computed as:
\begin{equation}
    \mathcal{A}_i = \frac{1}{\epsilon_1} \left[ \mathbb{P} (z^* + \gamma_i, I^*, r^*) - z^* \right]
\end{equation}

\noindent where $\gamma_i$ is the $i$-th column of $\epsilon_1 \mathbb{I}_5$; $\epsilon_1$ is a small number, and $\mathbb{I}_5$ is the $5 \times 5$ identity matrix. 
The first and second columns of $\mathcal{B}$ are computed as:
\begin{subequations}
\begin{align}
    \mathcal{B}_1 = \frac{1}{\epsilon_2} \left[ \mathbb{P} (z^*, I^* + \epsilon_2, r^*) - z^* \right] \\
    \mathcal{B}_2 = \frac{1}{\epsilon_2} \left[ \mathbb{P} (z^*, I^*, r^* + \epsilon_2) - z^* \right]
\end{align}
\end{subequations}

\noindent where $\epsilon_2$ is a small number. If $(\mathcal{A}, \mathcal{B})$ is controllable, the orbit $\mathcal{O}^*$ can be asymptotically stabilized by the discrete feedback
\begin{equation} \label{eq:discrete-feedback}
    u(j) = \mathcal{K} e(j)
\end{equation}

\noindent where $\mathcal{K}$ is chosen to place the eigenvalues of $(\mathcal{A} + \mathcal{B}\mathcal{K})$ inside the unit circle. Thus, for every intersection of the system trajectory with $\Sigma$, the applied inputs are given by
\begin{equation} \label{eq:orbit-I-r}
    \begin{bmatrix} I(j) \\ r(j) \end{bmatrix} = \begin{bmatrix} I_k \\ r_k \end{bmatrix} + u(j)
\end{equation}

\noindent The additional input $u(j)$ is inactive when the system trajectory is sufficiently close to $\mathcal{O}^*$, \emph{i.e.}, $\norm{e(j)}$ is sufficiently small.

\begin{remark}
    For propeller motion in continuous-time, it was shown that orbital stabilization could be realized by varying only the applied force without a discontinuous change in its point of application when the system trajectory intersected a chosen Poincar\'e section \cite{khandelwal_propeller_2025, kant_orbital_2020}. However, numerical results show that both the applied impulse and its point of application need to be varied on $\Sigma$ to stabilize an orbit for propeller motion using purely impulsive inputs as the pair $(\mathcal{A}, \mathcal{B}_1)$ is found not to be controllable. 

    Further, numerical results show that the pair $(\mathcal{A}, \mathcal{B})$ is not controllable when $\lambda_x = \lambda_y = 0$. This indicates that orbital stabilization is not possible when a deadbeat controller is used to enforce the DVHC.
\end{remark}

\section{Simulation} \label{sec6}

The physical parameters of the devil-stick in SI units are chosen to be:
\begin{equation*}
    m = 0.1, \quad \ell = 0.5, \quad J = \frac{1}{12} m \ell^2 = 0.0021
\end{equation*}

\noindent For all simulations in this section, $\delths$ is given by \eqref{eq:delths} in Appendix \ref{sec44} with the choice $N = 6$.

\subsection{Enforcing the DVHC}

We consider the DVHC given by \eqref{eq:VHC-Phi} with 
\begin{equation} \label{eq:vhc-sim}
    R = 1, \quad \phi = \pi/2
\end{equation}

\noindent It is known from Theorem \ref{thm:periodic} that the resulting DZD given by \eqref{eq:zerodyn-special} permits stable periodic orbits if $\theta_1$ is chosen to ensure that $\theta_k \!\!\mod 2\pi \in \{0, \delths/2\}$ for some $k$. 
We simulate the behavior of the system under the control inputs \eqref{eq:Ik} and \eqref{eq:rk}, with the values 
\begin{equation} \label{eq:lambda-sim}
    \lambda_x = \lambda_y = 0.5
\end{equation}

\begin{figure}[t]
    \centering
    \psfrag{A}{\hspace{-10pt} \footnotesize{$\theta$ (rad)}}
    \psfrag{B}{\hspace{-18pt} \footnotesize{$\omega$ (rad/s)}}
    \psfrag{T}{\footnotesize{$k$}}
    \psfrag{Q}{\footnotesize{$\bar E_k$}}
    \psfrag{E}{\footnotesize{$\bar E = $}}
    \psfrag{F}{\hspace{-10pt} \footnotesize{$I_k$ (Ns)}}
    \psfrag{R}{\hspace{-10pt} \footnotesize{$r_k$ (m)}}
    \psfrag{C}{$\rho_x$}
    \psfrag{D}{$\rho_y$}
    \psfrag{G}{$\drho_x$}
    \psfrag{L}{$\drho_y$}
    \psfrag{O}{$\mathcal{O}^*$}
    \includegraphics[width=\linewidth]{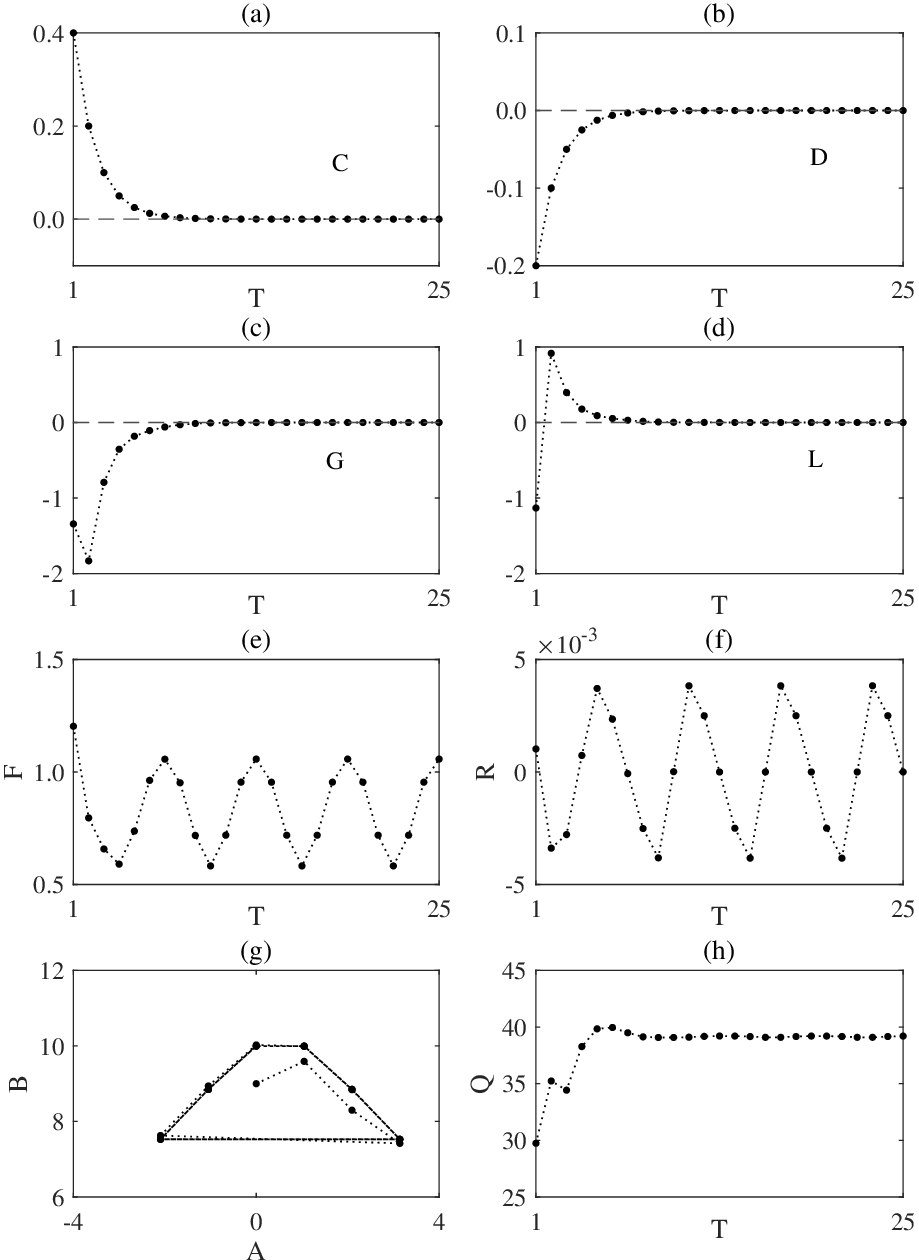}
    \caption{Stabilization of propeller motion of a devil-stick from initial conditions \eqref{eq:initial-conditions}: 
    (a)-(b) show the components of $\rho_k$, (c)-(d) show the components of $\drho_k$,
    (e) shows the applied impulse $I_k$, (f) shows the point of application $r_k$ of the impulsive force,
    (g) shows the evolution of the system trajectory $(\theta_k, \omega_k)$ in the $\theta$-$\omega$ plane with $\theta$ plotted in the interval $(-\pi, \pi]$, and (h) shows the value of $\bar E_k$ from \eqref{eq:invariant-piby2}.}
    \label{fig:sim-vhc}
\end{figure}

\begin{figure}[t]
    \centering
    \psfrag{A}{\hspace{-10pt} \footnotesize{$\theta$ (rad)}}
    \psfrag{B}{\hspace{-18pt} \footnotesize{$\omega$ (rad/s)}}
    \psfrag{T}{\footnotesize{$k$}}
    \psfrag{Q}{\footnotesize{$\bar E_k$}}
    \psfrag{E}{\footnotesize{$\bar E = $}}
    \psfrag{F}{\hspace{-10pt} \footnotesize{$I_k$ (Ns)}}
    \psfrag{R}{\hspace{-10pt} \footnotesize{$r_k$ (m)}}
    \psfrag{C}{$\rho_x$}
    \psfrag{D}{$\rho_y$}
    \psfrag{G}{$\drho_x$}
    \psfrag{L}{$\drho_y$}
    \psfrag{O}{$\mathcal{O}^*$}
    \includegraphics[width=\linewidth]{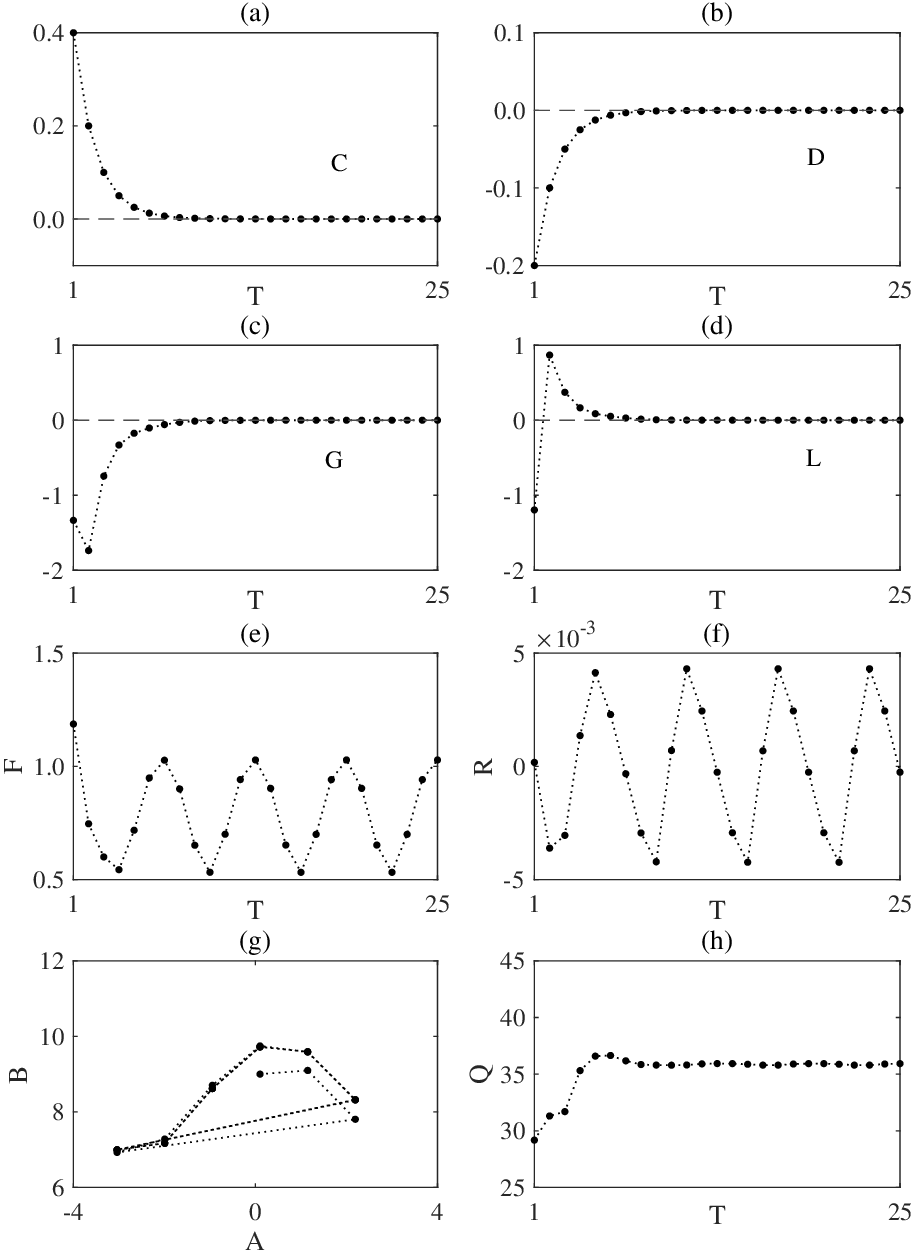}
    \caption{Stabilization of propeller motion of a devil-stick from initial conditions \eqref{eq:initial-conditions-np}: 
    (a)-(b) show the components of $\rho_k$, (c)-(d) show the components of $\drho_k$,
    (e) shows the applied impulse $I_k$, (f) shows the point of application $r_k$ of the impulsive force,
    (g) shows the evolution of the system trajectory $(\theta_k, \omega_k)$ in the $\theta$-$\omega$ plane with $\theta$ plotted in the interval $(-\pi, \pi]$, and (h) shows the value of $\bar E_k$ from \eqref{eq:invariant-piby2}.}
    \label{fig:sim-vhc-np}
\end{figure}

We first consider the initial conditions with $\theta_1 = 0$:
\begin{equation} \label{eq:initial-conditions}
    \begin{bmatrix} q^T & \dot q^T \end{bmatrix}^T = \begin{bmatrix} 0.4 & -1.2  & 0.0 & 6.1 & -6.0 & 9.0 \end{bmatrix}^T 
\end{equation}

\noindent for which $\rho_1 \neq 0$. The results of the simulation are shown in Fig.\ref{fig:sim-vhc} for 4 complete rotations of the devil-stick, \emph{i.e.} $k = 1$ through $k = 25$, a duration of approx. $2.91$ s. The components of $\rho_k$ are shown in Fig.\ref{fig:sim-vhc}(a)-(b), and demonstrate exponential convergence of the system trajectory to $\rho_k = 0$. The components of $\drho_k$ are shown in Fig.\ref{fig:sim-vhc}(c)-(d), and it can be seen that $\drho_k \to 0$ as $\rho_k \to 0$. As the DVHC is imposed, it is seen that the system trajectory converges to the periodic orbit given by
\begin{equation}
\begin{split}
    \mathcal{O} = \{(q, \dot q) : \qac(k) &= \Phi(\theta_k), \dqac(k) = \Psi(\theta_k, \omega_k), \\
    \omega_k &= 9.9963 \,\,\forall\, \theta_k \!\!\!\!\mod 2\pi = 0 \}
\end{split} 
\end{equation}

\noindent The applied impulse $I_k$ and its point of application $r_k$ are shown in Fig.\ref{fig:sim-vhc}(e) and Fig.\ref{fig:sim-vhc}(f) respectively, and they take on periodic values as the system trajectory converges to the orbit $\mathcal{O}$ above. The evolution of the system trajectory in the $\theta$-$\omega$ plane is shown in Fig.\ref{fig:sim-vhc}(g). Finally, Fig.\ref{fig:sim-vhc}(h) shows the plot of $\bar E_k$ from \eqref{eq:invariant-piby2} in Appendix \ref{appendix-invariant}; it is seen that the value of $\bar E_k$ becomes approximately invariant as the trajectory converges to $\mathcal{O}$.

We also simulate the behavior of the system from the initial conditions
\begin{equation} \label{eq:initial-conditions-np}
    \begin{bmatrix} q^T & \dot q^T \end{bmatrix}^T = \begin{bmatrix} 0.4 & -1.2  &       0.1 & 6.1 & -6.0 & 9.0 \end{bmatrix}^T 
\end{equation}

\noindent Since $\theta_k \!\!\mod 2\pi \notin \{0, \delths/2\}$ for any $k$ with this choice of $\theta_1$, the reduced system does not permit a stable, periodic orbit. However, the DVHC can still be enforced, as demonstrated by Fig.\ref{fig:sim-vhc-np}, which shows the results for 4 rotations of the devil-stick, \emph{i.e.} $k = 1$ through $k = 25$, a duration of approx. $3.05$ s. The components of $\rho_k$ in Fig.\ref{fig:sim-vhc-np}(a)-(b), and of $\drho_k$ in Fig.\ref{fig:sim-vhc-np}(c)-(d) again converge to $0$. The values of $I_k$ and $r_k$ are shown in  Fig.\ref{fig:sim-vhc-np}(e) and  Fig.\ref{fig:sim-vhc-np}(f), respectively. The evolution of the system trajectory in the $\theta$-$\omega$ plane is shown in Fig.\ref{fig:sim-vhc-np}(g). Finally, Fig.\ref{fig:sim-vhc-np}(h) shows the plot of $\bar E_k$ from \eqref{eq:invariant-piby2} in Appendix \ref{appendix-invariant}. It can be verified that the system trajectory does not converge to a periodic orbit, but exhibits a gradual drift in $\omega$ as the devil-stick rotates.

\begin{remark}
    Once the DVHC is stabilized, $\omega_k$ and $\bar E_k$ in Fig.\ref{fig:sim-vhc} assume periodic behavior, whereas $\omega_k$ and $\bar E_k$ in Fig.\ref{fig:sim-vhc-np} drift gradually over a large number of simulation steps. This is an artifact of the choice of $\theta_1$ for the two simulations, and agrees with the results in Appendices \ref{sec44} and \ref{sec45}.
    The choice of $\delths$ and the initial conditions determine the magnitude and direction of the drift in $\omega_k$ and $\bar E_k$. If $\omega_k$ (consequently, $\bar E_k$) decreases gradually over time, a point is eventually reached when real solutions to \eqref{eq:omkp1-quadratic-soln-piby2} cannot be found. 
    If $\omega_k$ (consequently, $\bar E_k$) increases gradually over time, the rate of increase decreases over time as can be seen in Fig.\ref{fig:zero-rational} in Appendix \ref{sec45}. 
\end{remark}

\subsection{Stabilization of a Periodic Orbit}

\begin{figure}[t]
    \centering
    \psfrag{A}{\hspace{-10pt} \footnotesize{$\theta$ (rad)}}
    \psfrag{B}{\hspace{-18pt} \footnotesize{$\omega$ (rad/s)}}
    \psfrag{T}{\footnotesize{$k$}}
    \psfrag{Q}{\footnotesize{$\bar E_k$}}
    \psfrag{E}{\footnotesize{$\bar E = $}}
    \psfrag{F}{\hspace{-10pt} \footnotesize{$I_k$ (Ns)}}
    \psfrag{R}{\hspace{-10pt} \footnotesize{$r_k$ (m)}}
    \psfrag{C}{$\rho_x$}
    \psfrag{D}{$\rho_y$}
    \psfrag{G}{$\drho_x$}
    \psfrag{L}{$\drho_y$}
    \psfrag{O}{$\mathcal{O}^*$}
    \includegraphics[width=\linewidth]{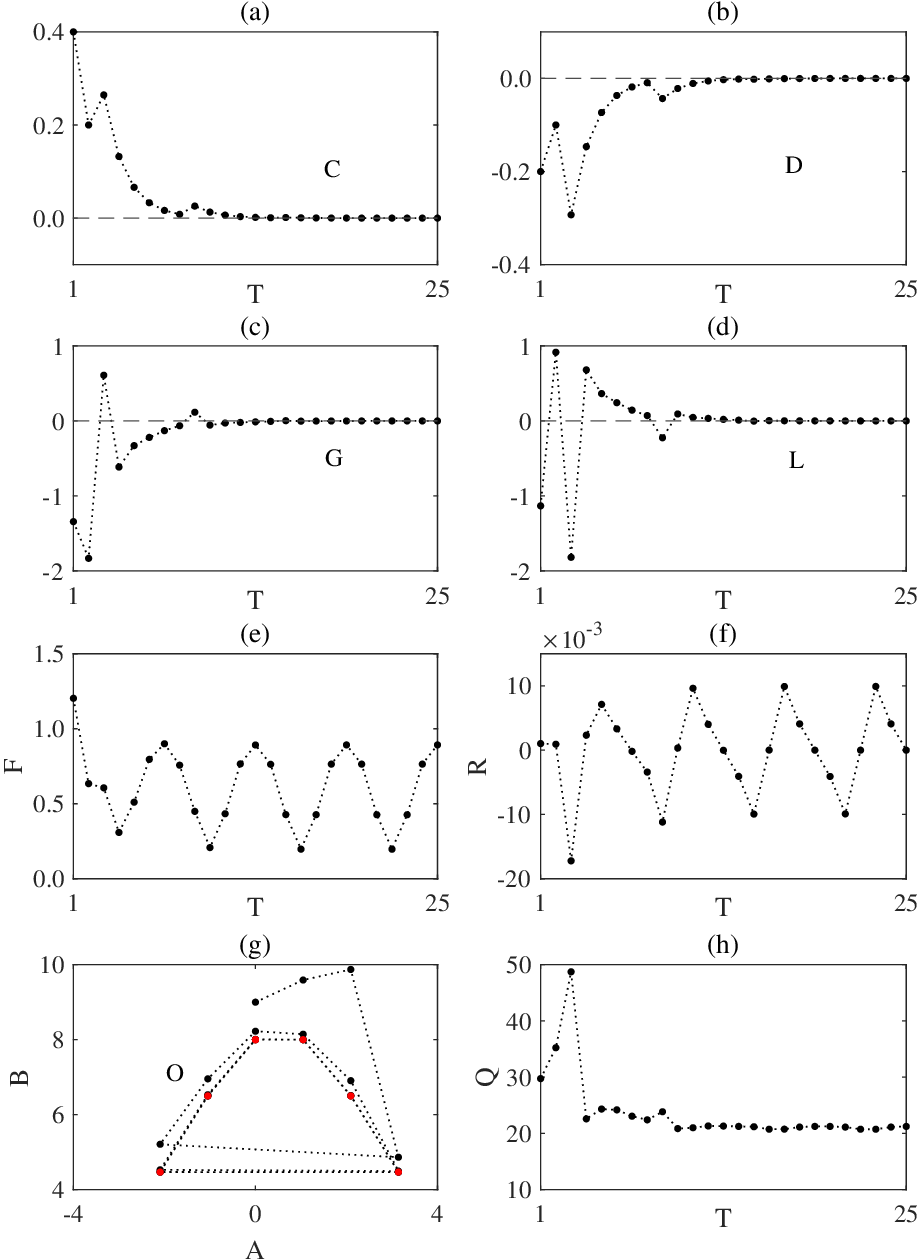}
    \caption{Orbital stabilization of propeller motion of a devil-stick from arbitrary initial conditions: 
    (a)-(b) show the components of $\rho_k$, (c)-(d) show the components of $\drho_k$,
    (e) shows the applied impulse $I_k$, (f) shows the point of application $r_k$ of the impulsive force,
    (g) shows the evolution of the system trajectory $(\theta_k, \omega_k)$ in the $\theta$-$\omega$ plane with $\theta$ plotted in the interval $(-\pi, \pi]$, and (h) shows the value of $\bar E_k$ from \eqref{eq:invariant-piby2}.}
    \label{fig:sim-orbit}
\end{figure}

Having demonstrated the effectiveness of the controller from Section \ref{sec33} in enforcing the DVHCs, we demonstrate the efficacy of the control design in Section \ref{sec5} in stabilizing a desired orbit. We again consider the DVHC \eqref{eq:VHC-Phi} with $\delths = 2\pi/6$ and the parameters in \eqref{eq:vhc-sim}.
The desired orbit to be stabilized is the one passing through
\begin{equation}
    \begin{bmatrix} q^T & \dot q^T \end{bmatrix}^T = \begin{bmatrix} 0 & -1 & 0 & 6.6159 & -4.4168 & 8 \end{bmatrix}^T
\end{equation}

\noindent which is the orbit 
\begin{equation} \label{eq:orbit-sim}
\begin{split}
    \mathcal{O}^* = \{(q, \dot q) : \qac(k) &= \Phi(\theta_k), \dqac(k) = \Psi(\theta_k, \omega_k), \\
    \omega_k &= 8 \,\,\forall\, \theta_k \!\!\!\!\mod 2\pi = \pi/3 \}
\end{split}
\end{equation}

\noindent The control inputs \eqref{eq:Ik} and \eqref{eq:rk} are obtained with the choice of $\lambda$ in \eqref{eq:lambda-sim}. For the stable periodic orbit \eqref{eq:orbit-sim}, it can be verified that feasible solutions to \eqref{eq:delta-k} (consequently, \eqref{eq:Ik} and \eqref{eq:rk}) can be found everywhere along the orbit. The same is true for solutions to \eqref{eq:omkp1-zero-dyn-piby2} everywhere along the orbit. The section on which the control input \eqref{eq:orbit-I-r} is applied is chosen to be
\begin{equation} \label{eq:poincare-section-sim}
    \Sigma = \{(q, \dot q) \in \mathbb{R}^6 : \theta \!\!\!\!\mod 2\pi = \pi/3, \omega > 0\}
\end{equation}

\noindent Corresponding to the orbit in \eqref{eq:orbit-sim} and the section \eqref{eq:poincare-section-sim}, the fixed point of the map $\mathbb{P}$ from \eqref{eq:fixed-point} is
\begin{align}
    z^* &= \begin{bmatrix}
        0.8660 & -0.5000 & 6.6159 & 3.1777 & 8
    \end{bmatrix}^T \notag \\
    I^* &= 0.7639, \quad r^* = -0.0041
\end{align}

\noindent The values of $\mathcal{A}$ and $\mathcal{B}$ in \eqref{eq:linearized-map} were obtained numerically as
\begin{equation*}
\begin{split}
    \mathcal{A} &= \begin{bmatrix}
    0.0156  &  0.0000  &  0.0000  &  0.0000  &  0.0000 \\
   -0.0000  &  0.0156  & -0.0000  & -0.0000  & -0.0000 \\
    2.1827  &  1.4297  &  0.4612  &  0.7989  &  0.0000 \\ 
    1.5525  &  0.8448  &  0.3111  &  0.5388  & -0.0000 \\
    2.7837  &  1.7288  &  0.5577  &  0.9660  & -0.0000
    \end{bmatrix}
    \\ 
    \mathcal{B} &= \begin{bmatrix}
    -0.0436  &  0.0334  &  8.7712  &  5.4318  & 10.2021 \\
     0.0000  & -1.4029  &-27.6739  & -8.0578  &-33.4632
    \end{bmatrix}^T
\end{split}
\end{equation*}

\noindent The eigenvalues of $\mathcal{A}$ do not all lie within the unit circle. It can be verified that the pair $(\mathcal{A}, \mathcal{B})$ is controllable, and the gain matrix $\mathcal{K}$ in \eqref{eq:discrete-feedback} that asymptotically stabilizes the closed-loop system is:
\begin{equation*}
    \mathcal{K} = \begin{bmatrix}
    -0.2439  & -0.1185 &  -0.0486  & -0.0842  &  0.0000 \\
     0.0066  &  0.0153  &  0.0018  &  0.0030  &  0.0000
    \end{bmatrix}
\end{equation*}

\noindent obtained using LQR with the weighting matrices
\begin{equation*}
    \mathcal{Q} = \mathbb{I}_5, \quad \mathcal{R} = 2 \mathbb{I}_2
\end{equation*}

The same initial conditions \eqref{eq:initial-conditions} are used, and the simulation results are shown in Fig.\ref{fig:sim-orbit} for 4 rotations of the devil stick, \emph{i.e.} $k = 1$ through $k = 25$, a duration of approx. $4.05$ s. The components of $\rho_k$ are shown in Fig.\ref{fig:sim-orbit}(a)-(b), and demonstrate convergence of the system trajectory to $\rho_k = 0$. Unlike in Fig.\ref{fig:sim-vhc}(a)-(b), the convergence is not uniform owing to the additional control inputs being applied at $\Sigma$. The components of $\drho_k$ are shown in Fig.\ref{fig:sim-orbit}(c)-(d); again $\drho_k \to 0$ as $\rho_k \to 0$. 
The applied impulse $I_k$ and its point of application $r_k$ are shown in Fig.\ref{fig:sim-orbit}(e) and Fig.\ref{fig:sim-orbit}(f) respectively; note that these inputs are obtained from \eqref{eq:orbit-I-r} for $j = 1, 2, 3$, which corresponds to $k = 2, 8, 14$. For $j > 3$, the controller $u(j)$ is inactive as the system trajectory is sufficiently close to $\mathcal{O}^*$. The control inputs take on periodic values as the system trajectory converges to the desired orbit $\mathcal{O}^*$
The evolution of the system trajectory in the $\theta$-$\omega$ plane is shown in Fig.\ref{fig:sim-orbit}(g), with the periodic points associated with $\mathcal{O}^*$ shown in red. Finally, Fig.\ref{fig:sim-orbit}(h) shows the plot of $\bar E_k$ from \eqref{eq:invariant-piby2}; it is seen that the value of $\bar E_k$ becomes approximately invariant as the trajectory converges to $\mathcal{O}^*$. 
The trajectory of the center-of-mass of the devil-stick is shown in Fig.\ref{fig:trajectory}. As the orbit $\mathcal{O}^*$ is stabilized, the center-of-mass is seen to lie on the curve defined by the DVHC in \eqref{eq:VHC-Phi} only at the instants impulsive inputs are applied.

\begin{figure}[t]
    \centering
    \psfrag{X}{\hspace{-10pt} \footnotesize{$h_x$ (m)}}
    \psfrag{Y}{\hspace{-10pt} \footnotesize{$h_y$ (m)}}
    \includegraphics[width=0.72\linewidth]{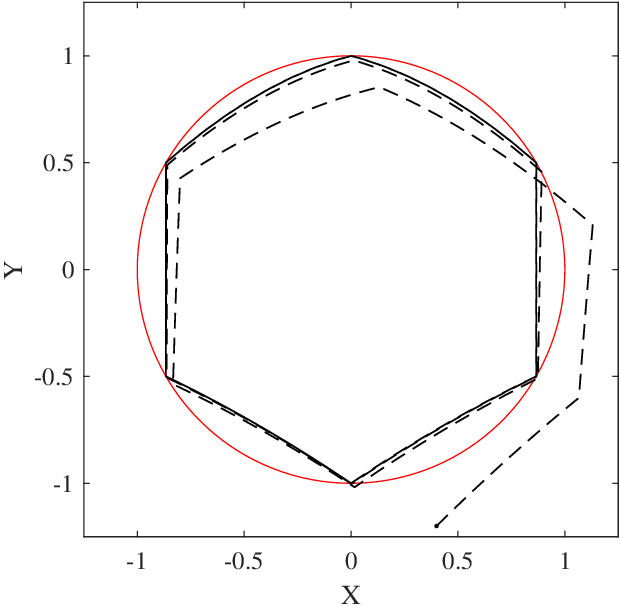}
    \caption{Trajectory of the center-of-mass of the devil-stick corresponding to the results in Fig.\ref{fig:sim-orbit}, with the DVHC in \eqref{eq:VHC-Phi} shown in red.}
    \label{fig:trajectory}
\end{figure}

\subsection{Aperiodic Motion}

\begin{figure}[t]
    \centering
    \psfrag{A}{\hspace{-10pt} \footnotesize{$\theta$ (rad)}}
    \psfrag{B}{\hspace{-18pt} \footnotesize{$\omega$ (rad/s)}}
    \psfrag{T}{\footnotesize{$k$}}
    \psfrag{Q}{\footnotesize{$\bar E_k$}}
    \psfrag{E}{\footnotesize{$\bar E = $}}
    \psfrag{F}{\hspace{-10pt} \footnotesize{$I_k$ (Ns)}}
    \psfrag{R}{\hspace{-10pt} \footnotesize{$r_k$ (m)}}
    \psfrag{C}{$\rho_x$}
    \psfrag{D}{$\rho_y$}
    \psfrag{G}{$\drho_x$}
    \psfrag{L}{$\drho_y$}
    \psfrag{O}{$\mathcal{O}^*$}
    \includegraphics[width=\linewidth]{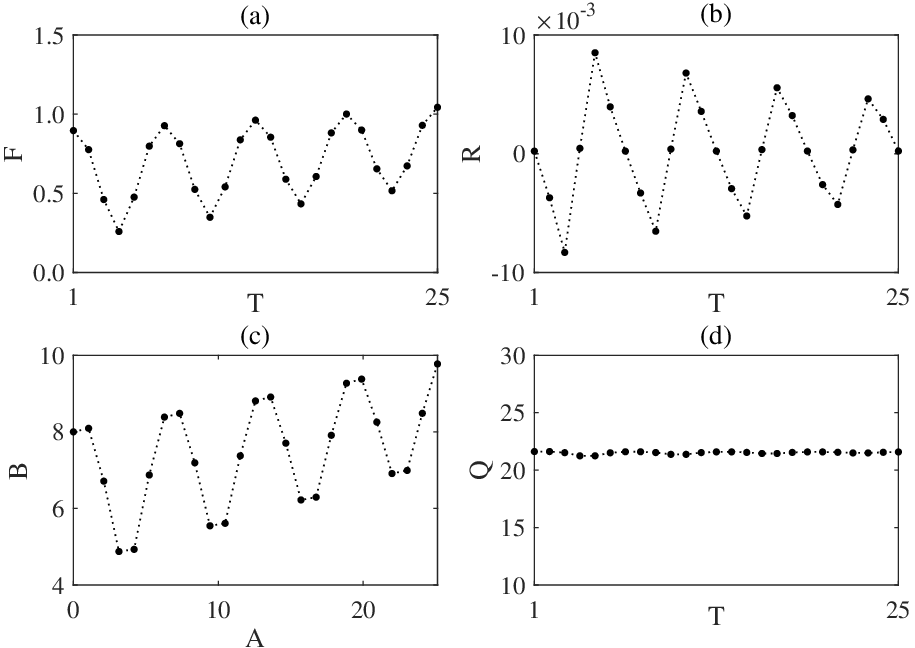}
    \caption{Aperiodic propeller motion of a devil-stick: 
    (a) shows the applied impulse $I_k$, (b) shows the point of application $r_k$ of the impulsive force,
    (c) shows the evolution of the system trajectory $(\theta_k, \omega_k)$ in the $\theta$-$\omega$ plane, and (d) shows the value of $\bar E_k$ from \eqref{eq:invariant}.}
    \label{fig:sim-aperiodic}
\end{figure}

We present a simulation with the DVHC \eqref{eq:VHC-Phi}, $\delths = 2\pi/6$, and parameters
\begin{equation}
    R = 1, \quad \phi = \pi/2 - 0.01
\end{equation}

\noindent Since $\phi \notin \{-\pi/2, \pi/2\}$, periodic orbits do not exist. We simulate the behavior of the system from the initial condition 
\begin{equation}
    \begin{bmatrix} q^T & \dot q^T \end{bmatrix}^T = \begin{bmatrix} 0 & -1 & 0 & 6.6538 & -4.3954 & 8 \end{bmatrix}^T
\end{equation}

\noindent for which $\rho_1 = 0$ and $\drho_1 = 0$. The results of the simulation are shown in Fig.\ref{fig:sim-aperiodic} for 4 rotations of the devil-stick, a duration of approx. $3.50$ s. The components of $\rho_k$ and $\drho_k$ remain steady at zero and are therefore not plotted. The control inputs \eqref{eq:Ik} and \eqref{eq:rk} are plotted in Fig.\ref{fig:sim-aperiodic}(a) and Fig.\ref{fig:sim-aperiodic}(b) showing their aperiodic behavior. The system trajectory in the $\theta$-$\omega$ plane is shown in Fig.\ref{fig:sim-aperiodic}(c), showing the increase in $\omega$ over every rotation. The value of $\bar E_k$ from \eqref{eq:invariant} is plotted in Fig.\ref{fig:sim-aperiodic}(d); it remains approximately invariant.

\section{Conclusion}

The hybrid dynamics of juggled objects controlled using impacts can be expressed using discrete-time maps. The concept of discrete virtual holonomic constraints is introduced to design trajectories for such underactuated systems described by discrete-time maps, in the context of propeller motion of a devil-stick. It is shown that the DVHCs can be enforced using discrete feedback in the form of impulsive inputs applied when the passive coordinate of the system evolves by a specific value. With the DVHCs enforced, the dynamics of the system reduces to the discrete zero dynamics, which is a two-dimensional discrete-time autonomous system involving the passive coordinate and velocity. The stability characteristics of the full system follow from the DZD, which is extensively analyzed for propeller motion of a devil-stick. It is shown that with appropriate choices of the DVHC parameters and initial conditions, the system permits a wealth of stable orbits with different rotational speeds. A desired orbit can be stabilized using the impulse controlled Poincar\'e map approach, by varying the impulsive input when the system trajectory intersects a chosen section, which occurs once every rotation of the devil-stick. Extensive simulation results demonstrate the efficacy of the controller enforcing the DVHCs, and the controller stabilizing a specific orbit.

The approach presented in this paper is generalizable to a larger class of systems manipulated using purely impulsive inputs, when only intermittent control over the configuration of the object is required. Future work is focused on realizing this generalization for a larger class of mechanical systems.
Further analysis of the zero dynamics for this problem and the general problem is also in progress. This paper has derived a model for \emph{rotations} of the passive coordinate, and we are investigating how it may be generalized to \emph{oscillations}, in which the sign of the passive velocity changes. 

\appendices
\section{Analysis of Discrete Zero Dynamics} \label{appendix:zero-dyn} 

\subsection{Quadratic Solution} \label{appendix:zero-dyn-sol}

The terms in \eqref{eq:omkp1-zero-dyn} can be rearranged to form a quadratic equation which can be solved for $\omega_{k+1}$ as
\begin{equation} \label{eq:omkp1-quadratic-soln}
\begin{split}
    &\omega_{k+1} = \frac{\Sp \omega_k}{2 \Sm} - \frac{K \sin\theta_k}{2 \Sm \omega_k} \\
    &\ \pm \frac{\Sp \omega_k}{2 \Sm}
    \sqrt{1 -  \frac{2 K (\Sp + 2 \Sm) \sin\theta_k}{\Sp^2 \omega_k^2} + \frac{K^2 \sin^2\theta_k}{\Sp^2 \omega_k^4}}
\end{split}
\end{equation}

\noindent The solution is real when 
\begin{equation*}
    1 -  \frac{2 K (\Sp + 2 \Sm) \sin\theta_k}{\Sp^2 \omega_k^2} + \frac{K^2 \sin^2\theta_k}{\Sp^2 \omega_k^4} \geq 0 \quad \forall k
\end{equation*}

\noindent which imposes lower bounds on the values of $\omega_k$ given $\theta_k$. It must now be ascertained whether the positive or negative square root solution in \eqref{eq:omkp1-quadratic-soln} must be used. In particular, we observe that if $\theta_k \!\!\mod 2\pi \in \{0, \pi\}$, it follows from \eqref{eq:omkp1-zero-dyn} that
\begin{equation*} 
    \omega_{k+1} = \frac{\Sp \omega_k}{\Sm}
\end{equation*}

\noindent which is obtained on selecting the positive square root solution in \eqref{eq:omkp1-quadratic-soln}. Selecting the negative square root solution in \eqref{eq:omkp1-quadratic-soln} gives $\omega_{k+1} = 0$, which is incorrect. Further, for sufficiently large values of $\omega_k$ and $\omega_{k+1}$, it can be seen from \eqref{eq:omkp1-zero-dyn} that
\begin{equation*} \label{eq:omkp1-approx}
    \omega_{k+1} \approx \frac{\Sp \omega_k}{\Sm}
\end{equation*}

\noindent which is also obtained on selecting the positive square root solution in \eqref{eq:omkp1-quadratic-soln} and approximating the expression inside the square root by $1$ for sufficiently large $\omega_k$. Thus, we choose the positive square root solution in \eqref{eq:omkp1-quadratic-soln}\footnote{Simulations demonstrate that this choice provides a viable solution for a large range of $\omega_k$ values.}.

If $\phi \in \{-\pi/2, \pi/2\}$, \eqref{eq:omkp1-quadratic-soln} takes the form
\begin{equation} \label{eq:omkp1-quadratic-soln-piby2}
    \omega_{k+1} = \frac{\omega_k}{2} + \frac{P \sin\theta_k}{2 \omega_k}
    \pm \frac{\omega_k}{2}
    \sqrt{1 +  \frac{6P \sin\theta_k}{\omega_k^2} + \frac{P^2 \sin^2\theta_k}{\omega_k^4}}
\end{equation}

\noindent which is a solution to \eqref{eq:omkp1-zero-dyn-piby2}. Again, if $\theta_k \!\!\mod 2\pi \in \{0, \pi\}$, the correct solution is obtained on choosing the positive square root solution in \eqref{eq:omkp1-quadratic-soln-piby2}, since it follows from \eqref{eq:omkp1-zero-dyn-piby2} that $\omega_{k+1} = \omega_k$. It can similarly be shown that the positive square root solution in \eqref{eq:omkp1-quadratic-soln-piby2} is the viable choice for a large range of $\omega_k$ values.

\subsection{Approximate Invariant} \label{appendix-invariant}

The continuous-time zero dynamics \eqref{eq:lim-zero-dyn} is known to have the following integral of motion \cite{perram_explicit_2003, khandelwal_propeller_2025}, which is invariant along solutions
\begin{equation} \label{eq:lim-invariant}
    E = e^{-2 \theta \cot\phi} \left[ \frac{1}{2}\dot\theta^2 - \frac{g (2 \sin\theta \cot\phi + \cos\theta)}{R\sin\phi (4\cot^2\phi + 1)} \right]
\end{equation}

Based on the integral of motion above, we propose the following:
\begin{proposition}
The following quantity is \emph{approximately} invariant\footnote{We have been unable to find an exact invariant along solutions to the discrete zero dynamics in \eqref{eq:zerodyn} and consider it to be an open problem.} along solutions of \eqref{eq:zerodyn}.
\begin{equation} \label{eq:invariant}
\begin{split}
    &\bar E_k = e^{-2 \cot\phi L_e \left(\theta_k - \delths/2\right)} \left[ \frac{1}{2}\omega_k^2 \right. \\
    &\left. - \frac{g L_t \left\{2 \cot\phi \sin\left(\theta_k - \dfrac{\delths}{2}\right) + \cos\left(\theta_k - \dfrac{\delths}{2}\right) \right\}}{R\sin\phi (4\cot^2\phi + 1)} \right]
\end{split}
\end{equation}

\noindent where
\begin{align*}
    L_e &\triangleq \frac{\sin(\delths/2)}{(\delths/2)} \frac{1}{\cos(\delths/2)} \\
    L_t &\triangleq \frac{(\delths/2)^2}{\sin^2(\delths/2)} \frac{1}{\cos(\delths/2)}
\end{align*}
\end{proposition}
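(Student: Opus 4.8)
The plan is to establish the claim as the discrete counterpart of the verification that $dE/dt=0$ for the continuous invariant \eqref{eq:lim-invariant}, quantifying the failure of exact conservation as a higher-order term in $\delths$. The quantity $\bar E_k$ in \eqref{eq:invariant} is built from \eqref{eq:lim-invariant} by the substitutions $\theta \mapsto \theta_k - \delths/2$ (the midpoint of the preceding flight arc, which is precisely the argument appearing in the velocity constraint \eqref{eq:VHC-Psi}) and $\dot\theta \mapsto \omega_k$, together with the correction factors $L_e, L_t$, both of which tend to $1$ as $\delths \to 0$. Consequently $\bar E_k \to E$ in the continuous limit, and the goal reduces to bounding the one-step increment $\Delta \bar E_k \triangleq \bar E_{k+1} - \bar E_k$ and showing that it vanishes to high order in $\delths$ along solutions of the DZD \eqref{eq:zerodyn}.

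First I would compute $\Delta \bar E_k$ explicitly. Writing $\bar E_k = a_k[\tfrac12\omega_k^2 - C_k]$, with $a_k = e^{-2\cot\phi\,L_e(\theta_k - \delths/2)}$ and $C_k$ the potential term in \eqref{eq:invariant}, and using $\theta_{k+1} - \delths/2 = \theta_k + \delths/2$, the two exponentials share the factor $a_k$ and differ by $\mu \triangleq e^{-2\cot\phi\,L_e\delths}$, so that $\Delta\bar E_k = a_k\{\tfrac12(\mu\omega_{k+1}^2 - \omega_k^2) - (\mu C_{k+1} - C_k)\}$. The trigonometric differences hidden in $C_{k+1}-C_k$ collapse under the sum-to-product identities already used to pass from \eqref{eq:VHC-Psi-Phi} to \eqref{eq:VHC-Psi}, producing factors of $\sin(\delths/2)$, while the kinetic combination is reduced using the DZD relation \eqref{eq:omkp1-zero-dyn} to eliminate the coupling between $\omega_{k+1}$ and $\omega_k$; the term $\tfrac1{\omega_k}+\tfrac1{\omega_{k+1}} = (\omega_k+\omega_{k+1})/(\omega_k\omega_{k+1})$ enters at this stage.

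The cancellation that makes $\bar E_k$ nearly invariant hinges on the identity $(\omega_k+\omega_{k+1})^2/(\omega_k\omega_{k+1}) = 4 + (\omega_{k+1}-\omega_k)^2/(\omega_k\omega_{k+1})$. The leading constant $4$, combined with the $\sin(\delths/2)$ factors from the trigonometric differences, is what cancels the $L_t$-scaled potential difference; indeed $L_e$ and $L_t$ are exactly the factors required to force this cancellation, and their role is clearest in the special case $\phi = \pm\pi/2$, where $\cot\phi = 0$ gives $\mu = 1$ and the computation closes in one line: from \eqref{eq:omkp1-zero-dyn-piby2} one obtains the closed-form residual $\Delta\bar E_k = -\tfrac{g\delths^2 \sin\theta_k}{4R\sin\delths}\,(\omega_{k+1}-\omega_k)^2/(\omega_k\omega_{k+1})$. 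Since \eqref{eq:omkp1-zero-dyn} gives $\omega_{k+1}-\omega_k = O(\delths)$ whenever $\sin\phi \neq 0$ and $\omega_k$ is bounded away from zero, this residual is $O(\delths^3)$ per step, which is the precise sense in which $\bar E_k$ is \emph{approximately} invariant; summing over the $O(1/\delths)$ steps of one rotation leaves a net drift of order $\delths^2$.

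I expect the main obstacle to be the general-$\phi$ case, where $\mu \neq 1$ and the kinetic part is $\tfrac12(\mu\omega_{k+1}^2 - \omega_k^2)$ rather than $\tfrac12(\omega_{k+1}^2-\omega_k^2)$. Expanding $\mu = 1 - 2\cot\phi\,L_e\delths + \cdots$ generates cross-terms coupling the exponential with both the kinetic and potential parts of the bracket, and one must verify that the specific forms $L_e = \tfrac{\sin(\delths/2)}{(\delths/2)\cos(\delths/2)}$ and $L_t = \tfrac{(\delths/2)^2}{\sin^2(\delths/2)\cos(\delths/2)}$ arrange cancellation of every contribution through order $\delths^2$, leaving only the $(\omega_{k+1}-\omega_k)^2$ remainder. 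This is routine but bookkeeping-heavy Taylor analysis. A secondary, conceptual difficulty is that no exact invariant is known (cf.\ the footnote), so the statement must be read as an asymptotic bound on the per-step drift, and care is needed to make the order estimate uniform over the range of $(\theta_k,\omega_k)$ for which real DZD solutions exist.
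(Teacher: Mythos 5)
Your proposal is correct where it is explicit, and it takes a genuinely different --- and more analytical --- route than the paper. The paper never estimates the per-step drift at all: its entire justification of the proposition is (i) the consistency check $\lim_{\delths \to 0} L_e = \lim_{\delths \to 0} L_t = 1$, hence $\lim_{\delths \to 0} \bar E_k = E$ with $E$ the exact integral \eqref{eq:lim-invariant} of the continuous-time zero dynamics \eqref{eq:lim-zero-dyn}, and (ii) numerical verification, with the qualitative remark that fluctuations shrink for larger $\omega_k$ and smaller $\delths$. Your computation for $\phi \in \{-\pi/2, \pi/2\}$ checks out exactly: using $\theta_{k+1} - \delths/2 = \theta_k + \delths/2$, the identity $\cos(\theta_k + \delths/2) - \cos(\theta_k - \delths/2) = -2 \sin\theta_k \sin(\delths/2)$, the DZD \eqref{eq:omkp1-zero-dyn-piby2}, and $(\omega_k + \omega_{k+1})^2/(\omega_k \omega_{k+1}) = 4 + (\omega_{k+1} - \omega_k)^2/(\omega_k \omega_{k+1})$, the $L_t$-scaled potential difference (via $2\sin(\delths/2)\cos(\delths/2) = \sin\delths$) cancels the constant-$4$ contribution exactly, leaving
\begin{equation*}
    \bar E_{k+1} - \bar E_k = \frac{P \sin\theta_k}{2} \, \frac{(\omega_{k+1} - \omega_k)^2}{\omega_k \, \omega_{k+1}}
\end{equation*}
which is your residual (your overall minus sign corresponds to $\phi = \pi/2$; it flips for $\phi = -\pi/2$). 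This closed form buys what the paper's argument cannot: the $O(\delths^3)$ per-step and $O(\delths^2)$ per-rotation bounds, plus an analytical explanation of the paper's numerical observations --- the drift vanishes when $\sin\theta_k = 0$ (consistent with $\omega_{k+1} = \omega_k$ there) and decays like $1/(\omega_k \omega_{k+1})$ for fast rotations. The one respect in which your write-up falls short of a complete proof is the general-$\phi$ case, where $\mu \neq 1$ and you defer the order-$\delths^2$ cancellations to ``routine bookkeeping''; that portion remains a sketch. But since the paper proves nothing analytic beyond the $\delths \to 0$ limit for any $\phi$, your argument, where carried out, is strictly more informative than the paper's own.
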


\noindent It can be seen that $\lim_{\delths \to 0} L_e = 1$ and $\lim_{\delths \to 0} L_t = 1$; therefore, using \eqref{eq:lim-theta} in \eqref{eq:invariant}, we obtain
\begin{equation}
    \lim_{\delths \to 0} \bar E_k = E
\end{equation}

\noindent The approximate invariance of the quantity $\bar E_k$ is verified numerically. The variations in $\bar E_k$ are smaller for larger values of $\omega_k$ and smaller values of $\delths$.

If $\phi \in \{-\pi/2, \pi/2\}$, $\cot\phi = 0$, and \eqref{eq:invariant} reduces to
\begin{equation} \label{eq:invariant-piby2}
    \bar E_k = 
    \begin{cases}
        &\frac{1}{2}\omega_k^2 + \dfrac{g L_t \cos\left(\theta_k - \frac{\delths}{2}\right)}{R}, \quad \phi = -\pi/2 \\
        &\frac{1}{2}\omega_k^2 - \dfrac{g L_t \cos\left(\theta_k - \frac{\delths}{2}\right)}{R}, \quad \phi =  \pi/2
    \end{cases}
\end{equation}

\noindent If $\bar E_k$ were an \emph{exact} invariant, the above expression implies
\begin{itemize}
    \item periodic behavior of $\omega_k$,
    \item extrema of $\omega_k$ at $\theta_k \!\!\mod 2\pi = \delths/2$ and $\theta_k \!\!\mod 2\pi = \pi + \delths/2$, and 
    \item $\omega_{k+1} = \omega_k$ when $\theta_k \!\!\mod 2\pi \in \{0, \pi\}$, which agrees with \eqref{eq:omkp1-zero-dyn-piby2}.
\end{itemize}

\begin{remark}
    Zero dynamics induced by VHCs have an integral of motion which provides the nature of solutions on the constraint manifold. 
    Similarly, an exact invariant for the DZD induced by DVHCs would provide the nature of solutions for the solution set described in Remark \ref{rem:solution-set}. Here, we have an approximate invariant, which provides an estimate of the nature of solutions.
\end{remark}

\subsection{Nature of Solutions} \label{sec44}

It can be seen from \eqref{eq:lim-invariant} that periodic orbits for \eqref{eq:lim-zero-dyn} are possible only when $\phi \in \{-\pi/2, \pi/2\}$ \cite{khandelwal_propeller_2025}. 
It is therefore reasonable to expect that \eqref{eq:zerodyn} also permits periodic orbits only when $\phi \in \{-\pi/2, \pi/2\}$, \emph{i.e.}, when the zero dynamics is expressed by \eqref{eq:zerodyn-special} and the approximate invariant by \eqref{eq:invariant-piby2}.
The following results on the nature of the solutions to \eqref{eq:zerodyn-special} have partly been presented in \cite{khandelwal_two-dimensional_2024}.
We first observe that periodic solutions to \eqref{eq:hybrid-theta-delta} are only permitted if $\delths$ is a rational submultiple of $2\pi$, \emph{i.e.}, $\delths = (p/q) 2\pi$, where $p, q \in Z^+$. If $\gcd(p, q) = 1$, $\theta \!\!\mod 2\pi$ is $q$-periodic since
\begin{equation} \label{eq:th-rational-periodicity}
    \theta_{k+q} = \theta_k + 2\pi p = \theta_k \mod 2\pi
\end{equation}

\noindent and $\theta \!\!\mod 2\pi$ returns to its original value after $p$ revolutions around the circle. In particular, if $p = 1$ and $q = N$, $\delths = 2\pi/N$; $\theta \!\!\mod 2\pi$ is periodic with period $N$, and $\theta \!\!\mod 2\pi$ returns to its original value after a single revolution.

For the special case when $\delths$ is given by 
\begin{equation} \label{eq:delths}
   \delths = \frac{2\pi}{N}, \quad N \in Z^+, \quad N \geq 3
\end{equation}
we prove that periodic solutions to \eqref{eq:zerodyn-special} are guaranteed under specific conditions, beginning with the assumption
\begin{assumption} \label{asm:omk-min-periodic}
For any initial condition $(\theta_1, \omega_1)$,
\begin{equation*} 
    \omega_{k, \mathrm{min}}^2 > |P|, \qquad  \omega_{k, \mathrm{min}} \triangleq \min_{k} \omega_k
    \end{equation*}
\end{assumption}

\begin{lemma} \label{lem:zero}
The dynamical system in \eqref{eq:zerodyn-special} is periodic with period $N$ for $\theta_1 = 0$ and initial conditions satisfying Assumption \ref{asm:omk-min-periodic} when $\delths$ is given by \eqref{eq:delths}.
\end{lemma}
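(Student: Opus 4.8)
My plan is to reduce the claim to a statement about a one-dimensional, time-reversible map and to exploit a reflection symmetry of the forcing. First I would dispose of the $\theta$-component: since $\delths = 2\pi/N$ and $\theta_1 = 0$, the first row of \eqref{eq:zerodyn-special} gives $\theta_k = (k-1)\,2\pi/N$, so the sequence $s_k \triangleq \sin\theta_k$ is $N$-periodic and, crucially, anti-symmetric about its zeros: a direct computation yields $s_{N+2-k} = -s_k$ together with $s_1 = 0$ (and, when $N$ is even, $s_{N/2+1}=0$). The $\theta$-part of the map is therefore already $N$-periodic, and the entire lemma reduces to showing $\omega_{k+N}=\omega_k$.

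Next I would regard the $\omega$-recursion \eqref{eq:omkp1-zero-dyn-piby2} as a non-autonomous scalar map $\omega_{k+1}=T_{s_k}(\omega_k)$, where $T_s$ is the positive-root branch of \eqref{eq:omkp1-quadratic-soln-piby2}, and establish two structural properties. First, $T_0=\mathrm{id}$, which is immediate from \eqref{eq:omkp1-zero-dyn-piby2} since $s=0$ forces $\omega_{k+1}=\omega_k$. Second, the map is time-reversible, $T_{-s}=T_s^{-1}$: this follows because the defining relation $\omega_k\,\omega_{k+1}\,(\omega_{k+1}-\omega_k)=P s_k\,(\omega_{k+1}+\omega_k)$ is invariant under the simultaneous exchange $\omega_k\leftrightarrow\omega_{k+1}$, $s_k\to -s_k$, so that $\omega_{k+1}=T_{s}(\omega_k)$ is equivalent to $\omega_k=T_{-s}(\omega_{k+1})$.

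With these in hand I would exhibit the folding. Starting from $\omega_2=T_{s_1}(\omega_1)=\omega_1$ (using $s_1=0$), I would peel the one-period return map $\Pi \triangleq T_{s_N}\circ\cdots\circ T_{s_1}$ from the outside in: the outermost factor $T_{s_N}=T_{-s_2}=T_{s_2}^{-1}$ cancels the innermost surviving factor $T_{s_2}$, then $T_{s_{N-1}}=T_{s_3}^{-1}$ cancels $T_{s_3}$, and so on, the central factor being $T_{s_{N/2+1}}=T_0=\mathrm{id}$ when $N$ is even, so that $\Pi=\mathrm{id}$. Equivalently, along the actual orbit the relation $\omega_k=T_{-s_k}(\omega_{k+1})=T_{s_{N+2-k}}(\omega_{k+1})$ forces the trajectory to retrace itself, giving $\omega_{N+2-m}=\omega_{m+1}$ by induction and hence $\omega_{N+1}=\omega_1$; periodicity then propagates to all $k$ by the $N$-periodicity of $s_k$, and $\theta_{k+N}=\theta_k+2\pi$, establishing period-$N$ behavior.

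The main obstacle is the careful justification of the time-reversibility property together with well-posedness of the construction: $T_s$ is single-valued only on the branch selected by the positive square root in \eqref{eq:omkp1-quadratic-soln-piby2}, and I must ensure that ``forward then reversed'' genuinely returns to the seed value rather than landing on the spurious root. This is exactly where Assumption \ref{asm:omk-min-periodic} enters: the condition $\omega_{k,\mathrm{min}}^2>|P|$ keeps the discriminant in \eqref{eq:omkp1-quadratic-soln-piby2} nonnegative and pins the orbit to the physical branch at every step, so that $T_{-s}=T_s^{-1}$ holds along the trajectory and the telescoping is valid. I would confirm the branch selection by continuity from the $s=0$ case, where both branches collapse onto the identity.
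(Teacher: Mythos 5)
Your construction is, at bottom, the paper's own argument in different notation. The shared skeleton: the forcing $s_k \triangleq \sin\theta_k$ is anti-symmetric about the half-rotation ($s_{N+2-k}=-s_k$), the steps with $s_k=0$ are neutral (at $k=1$, and at $k=N/2+1$ for even $N$), and a fold-and-cancel induction pairing step $k$ with step $N+2-k$ yields $\omega_{N+2-m}=\omega_{m+1}$, hence $\omega_{N+1}=\omega_2=\omega_1$, with Assumption \ref{asm:omk-min-periodic} invoked at exactly the same point to force each paired cancellation. The only difference is packaging: you telescope inverses of a one-parameter family of maps via $T_{-s}=T_s^{-1}$, whereas the paper never inverts anything --- it adds the two paired copies of the implicit relation \eqref{eq:omkp1-zero-dyn-piby2}, uses the previously established inner equality to collapse the reciprocal terms, and factors out the difference, which Assumption \ref{asm:omk-min-periodic} forces to vanish.

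The genuine problem is your justification of the reversibility step. Your branch-selection argument rests on the claim that at $s=0$ ``both branches collapse onto the identity''; this is false: at $s=0$ the discriminant in \eqref{eq:omkp1-quadratic-soln-piby2} equals $1$ and the two roots are $\omega_k$ and $0$ --- well separated, not collapsed. Likewise, Assumption \ref{asm:omk-min-periodic} does not keep the discriminant nonnegative: writing $u=P\sin\theta_k/\omega_k^2$, the discriminant $1+6u+u^2$ is negative for $u\in\left(-3-2\sqrt{2},\,-3+2\sqrt{2}\right)$, and values $u\in\left(-1,\,-3+2\sqrt{2}\right)$ are perfectly compatible with the assumption; existence of the orbit is presupposed by the lemma, not a consequence of the assumption. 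The repair that actually works --- and is what the paper's addition-and-factoring amounts to --- is to argue at the level of the implicit relation, with no branch bookkeeping at all: given the inductive equality $\omega_{N+2-m}=\omega_{m+1}$, the invariance of the relation under $(x,y,s)\mapsto(y,x,-s)$ (which you correctly identified) shows that $\omega_m$ and $\omega_{N+3-m}$ are both roots of the same quadratic $x\,\omega_{m+1}(\omega_{m+1}-x)=P s_m\,(x+\omega_{m+1})$, whose product of roots equals $Ps_m$. Since Assumption \ref{asm:omk-min-periodic} makes every orbit value exceed $\sqrt{|P|}$, two distinct such roots would have product greater than $|P|\ge |Ps_m|$, contradicting that their product is $Ps_m$; hence $\omega_m=\omega_{N+3-m}$. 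With that substitution in place of the continuity argument, your telescoping proof is sound and coincides with the paper's.
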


\begin{proof}
It follows from \eqref{eq:hybrid-theta-delta} and \eqref{eq:delths} that
\begin{equation*}
    \theta_{N+1} = \theta_1 + 2\pi = \theta_1 \mod 2\pi
\end{equation*}
which establishes periodicity of $\theta$. For any value of $N$ in \eqref{eq:delths}, we can express $\theta_k$ as follows
\begin{equation} \label{eq:thk-from-pi}
    \theta_k = (k-1) \delths = \pi + (2k - N - 2) \frac{\delths}{2}
\end{equation}

\noindent from which it follows that
\begin{equation} \label{eq:sin-from-pi}
    \sin\theta_k = - \sin\left[(2k - N - 2) \frac{\delths}{2}\right]
\end{equation}

\noindent Using \eqref{eq:sin-from-pi}, the dynamics \eqref{eq:omkp1-zero-dyn-piby2} may be rewritten as
\begin{equation} \label{eq:omkp1-zero-dyn-from-pi}
    \omega_{k+1} - \omega_k = - P \sin\left[(2k - N - 2) \frac{\delths}{2}\right] \left[ \frac{1}{\omega_k} + \frac{1}{\omega_{k+1}} \right]
\end{equation}

\noindent Substituting $k = 1$, we get
\begin{equation} \label{eq:om2-eq-om1}
    \omega_{2} = \omega_1
\end{equation}

\noindent We prove periodicity of $\omega$ separately for even and odd values of $N$.

\paragraph{Even $N$}

Substituting $k = \frac{N}{2}+1$ in \eqref{eq:thk-from-pi} and \eqref{eq:omkp1-zero-dyn-from-pi}, we get $\theta_{\frac{N}{2}+1} = \pi$ and $\omega_{\frac{N}{2}+2} - \omega_{\frac{N}{2}+1} = 0$, from which it follows that
\begin{equation} \label{eq:equal-pi}
    \omega_{\frac{N}{2}+2} = \omega_{\frac{N}{2}+1}
\end{equation}

\noindent Choosing $k = \left(\frac{N}{2}+1\right) - 1$ and $k = \left(\frac{N}{2}+1\right) + 1$ in \eqref{eq:omkp1-zero-dyn-from-pi}, we obtain
\begin{align} 
    \omega_{\frac{N}{2}+1} - \omega_{\frac{N}{2}} &= P \sin(\delths) \left[ \frac{1}{\omega_{\frac{N}{2}}} + \frac{1}{\omega_{\frac{N}{2}+1}} \right] \label{eq:even-Nby2} \\
    \omega_{\frac{N}{2}+3} - \omega_{\frac{N}{2}+2} &= - P \sin(\delths) \left[ \frac{1}{\omega_{\frac{N}{2}+2}} + \frac{1}{\omega_{\frac{N}{2}+3}} \right] \label{eq:even-Nby2p2}
\end{align}

\noindent Adding the above two equations, we obtain
\begin{equation}
\begin{split}
    &\omega_{\frac{N}{2}+3} - \omega_{\frac{N}{2}+2} + \omega_{\frac{N}{2}+1} - \omega_{\frac{N}{2}} \\&= P \sin(\delths) \left[ \frac{1}{\omega_{\frac{N}{2}}} + \frac{1}{\omega_{\frac{N}{2}+1}} - \frac{1}{\omega_{\frac{N}{2}+2}} - \frac{1}{\omega_{\frac{N}{2}+3}} \right]
\end{split}
\end{equation}

\noindent Using \eqref{eq:equal-pi}, the above equation simplifies to
\begin{equation}
    \omega_{\frac{N}{2}+3} - \omega_{\frac{N}{2}} = P \sin(\delths) \left[ \frac{\omega_{\frac{N}{2}+3} - \omega_{\frac{N}{2}}}{\omega_{\frac{N}{2}} \omega_{\frac{N}{2}+3}} \right]
\end{equation}
From Assumption \ref{asm:omk-min-periodic}, it follows
\begin{equation} \label{eq:equal-delths}
    \omega_{\frac{N}{2}+3} = \omega_{\frac{N}{2}}
\end{equation}

\noindent Similarly, for $k = \left(\frac{N}{2}+1\right) - 2$ and $k = \left(\frac{N}{2}+1\right) + 2$ in \eqref{eq:omkp1-zero-dyn-from-pi}, we obtain
\begin{align} 
    \omega_{\frac{N}{2}} - \omega_{\frac{N}{2}-1} &= P \sin(2\delths) \left[ \frac{1}{\omega_{\frac{N}{2}-1}} + \frac{1}{\omega_{\frac{N}{2}}} \right] \label{eq:even-Nby2m1} \\
    \omega_{\frac{N}{2}+4} - \omega_{\frac{N}{2}+3} &= - P \sin(2\delths) \left[ \frac{1}{\omega_{\frac{N}{2}+3}} + \frac{1}{\omega_{\frac{N}{2}+4}} \right] \label{eq:even-Nby2p3}
\end{align}

\noindent Adding the above two equations, we obtain
\begin{equation}
\begin{split}
    &\omega_{\frac{N}{2}+4} - \omega_{\frac{N}{2}+3} + \omega_{\frac{N}{2}} - \omega_{\frac{N}{2}-1} \\&= P \sin(2\delths) \left[ \frac{1}{\omega_{\frac{N}{2}-1}} + \frac{1}{\omega_{\frac{N}{2}}} - \frac{1}{\omega_{\frac{N}{2}+3}} - \frac{1}{\omega_{\frac{N}{2}+4}} \right]
\end{split}
\end{equation}

\noindent Using \eqref{eq:equal-delths}, the above equation simplifies to
\begin{equation}
    \omega_{\frac{N}{2}+4} - \omega_{\frac{N}{2}-1} = P \sin(2\delths) \left[ \frac{\omega_{\frac{N}{2}+4} - \omega_{\frac{N}{2}-1}}{\omega_{\frac{N}{2}-1} \omega_{\frac{N}{2}+4}} \right]
\end{equation}
From Assumption \ref{asm:omk-min-periodic}, it follows
\begin{equation} \label{eq:equal-2delths}
    \omega_{\frac{N}{2}+4} = \omega_{\frac{N}{2}-1}
\end{equation}

\noindent This process can be repeated for all pairs of values of $k = \left(\frac{N}{2}+1\right) - n$ and $k = \left(\frac{N}{2}+1\right) + n$, $n = 0, 1, 2, \dots, \left(\frac{N}{2}-1\right)$ to establish that 
\begin{equation} \label{eq:om-pairs-even}
    \omega_{\frac{N}{2}+n+2} = \omega_{\frac{N}{2}-n+1}, \quad n = 0, 1, 2, \dots, \left(\frac{N}{2}-1\right)
\end{equation}

\noindent In particular, for $n = \left(\frac{N}{2}-1\right)$, we get
\begin{equation}
    \omega_{N+1} = \omega_2
\end{equation}

\noindent Since $\omega_2 = \omega_1$ from \eqref{eq:om2-eq-om1}, we have
\begin{equation} \label{eq:om-periodicity-even}
    \omega_{N+1} = \omega_1
\end{equation}

\paragraph{Odd $N$}

Substituting $k = \frac{N+1}{2}$ and $k = \frac{N+3}{2}$ in \eqref{eq:omkp1-zero-dyn-from-pi}, we obtain
\begin{align} 
    \omega_{\frac{N+3}{2}} - \omega_{\frac{N+1}{2}} &= P \sin\left(\frac{\delths}{2}\right) \left[ \frac{1}{\omega_{\frac{N+1}{2}}} + \frac{1}{\omega_{\frac{N+3}{2}}} \right] \label{eq:odd-Np1by2} \\
    \omega_{\frac{N+5}{2}} - \omega_{\frac{N+3}{2}} &= - P \sin\left(\frac{\delths}{2}\right) \left[ \frac{1}{\omega_{\frac{N+3}{2}}} + \frac{1}{\omega_{\frac{N+5}{2}}} \right] \label{eq:odd-Np3by2}
\end{align}

\noindent Adding the above two equations, we obtain
\begin{equation}
    \omega_{\frac{N+5}{2}} - \omega_{\frac{N+1}{2}} = P \sin\left(\frac{\delths}{2}\right) \left[ \frac{1}{\omega_{\frac{N+1}{2}}} - \frac{1}{\omega_{\frac{N+5}{2}}} \right]
\end{equation}
which simplifies to
\begin{equation}
    \omega_{\frac{N+5}{2}} - \omega_{\frac{N+1}{2}} = P \sin\left(\frac{\delths}{2}\right) \left[ \frac{\omega_{\frac{N+5}{2}} - \omega_{\frac{N+1}{2}}}{\omega_{\frac{N+1}{2}}\omega_{\frac{N+5}{2}}} \right]
\end{equation}
From Assumption \ref{asm:omk-min-periodic}, it follows
\begin{equation} \label{eq:equal-delthsby2}
    \omega_{\frac{N+5}{2}} = \omega_{\frac{N+1}{2}}
\end{equation}

\noindent Similarly, choosing $k = \frac{N+1}{2}-1$ and $k = \frac{N+3}{2}+1$ in \eqref{eq:omkp1-zero-dyn-from-pi}, we obtain
\begin{align} 
    \omega_{\frac{N+1}{2}} - \omega_{\frac{N-1}{2}} &= P \sin\left(\frac{3\delths}{2}\right) \left[ \frac{1}{\omega_{\frac{N-1}{2}}} + \frac{1}{\omega_{\frac{N+1}{2}}} \right] \label{eq:odd-Nm1by2} \\
    \omega_{\frac{N+7}{2}} - \omega_{\frac{N+5}{2}} &= - P \sin\left(\frac{3\delths}{2}\right) \left[ \frac{1}{\omega_{\frac{N+5}{2}}} + \frac{1}{\omega_{\frac{N+7}{2}}} \right] \label{eq:odd-Np5by2}
\end{align}

\noindent Adding the above two equations, we obtain
\begin{equation}
\begin{split}
    &\omega_{\frac{N+7}{2}} - \omega_{\frac{N+5}{2}} + \omega_{\frac{N+1}{2}} - \omega_{\frac{N-1}{2}} \\&= P \sin\left(\frac{3\delths}{2}\right) \left[ \frac{1}{\omega_{\frac{N-1}{2}}} + \frac{1}{\omega_{\frac{N+1}{2}}} - \frac{1}{\omega_{\frac{N+5}{2}}} - \frac{1}{\omega_{\frac{N+7}{2}}} \right]
\end{split}
\end{equation}

\noindent Substituting \eqref{eq:equal-delthsby2} in the above equation and simplifying, we obtain
\begin{equation}
    \omega_{\frac{N+7}{2}} - \omega_{\frac{N-1}{2}} = P \sin\left(\frac{3\delths}{2}\right) \left[ \frac{\omega_{\frac{N+7}{2}} - \omega_{\frac{N-1}{2}}}{\omega_{\frac{N-1}{2}}\omega_{\frac{N+7}{2}}} \right]
\end{equation}
Using Assumption \ref{asm:omk-min-periodic}, we get
\begin{equation} \label{eq:equal-3delthsby2}
    \omega_{\frac{N+7}{2}} = \omega_{\frac{N-1}{2}}
\end{equation}

\noindent This process can be repeated for all pairs of values of $k = \frac{N+1}{2} - n$, $k = \frac{N+3}{2} + n$, $n = 0, 1, 2, \dots, \left(\frac{N-3}{2}\right)$ to establish that 
\begin{equation}  \label{eq:om-pairs-odd}
    \omega_{\frac{N+3}{2}+n+1} = \omega_{\frac{N+1}{2}-n}, \quad n = 0, 1, 2, \dots, \left(\frac{N-3}{2}\right)
\end{equation}

\noindent In particular, if $n = \left(\frac{N-3}{2}\right)$, the above equation gives
\begin{equation}
    \omega_{N+1} = \omega_2
\end{equation}

\noindent Since $\omega_2 = \omega_1$ from \eqref{eq:om2-eq-om1}, we have
\begin{equation} \label{eq:om-periodicity-odd}
    \omega_{N+1} = \omega_1
\end{equation}

\noindent This concludes the proof.
\end{proof}

\begin{lemma} \label{lem:delths}
The dynamical system in in \eqref{eq:zerodyn-special} is periodic with period $N$ for $\theta_1 = \delths/2$ and initial conditions satisfying Assumption \ref{asm:omk-min-periodic} when $\delths$ is given by \eqref{eq:delths}.
\end{lemma}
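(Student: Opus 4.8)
The plan is to mirror the proof of Lemma~\ref{lem:zero} verbatim in structure, shifting every index to account for the new initial condition $\theta_1 = \delths/2$. First I would establish periodicity of $\theta$ exactly as before: from \eqref{eq:hybrid-theta-delta} and \eqref{eq:delths}, $\theta_{N+1} = \theta_1 + 2\pi = \theta_1 \bmod 2\pi$. Next I would record the explicit form $\theta_k = (2k-1)\,\delths/2$ and, using $\pi = N\delths/2$, rewrite it as $\theta_k = \pi + (2k-N-1)\,\delths/2$, so that $\sin\theta_k = -\sin\!\big[(2k-N-1)\,\delths/2\big]$ --- the analogue of \eqref{eq:sin-from-pi}. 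The essential geometric fact is the reflection symmetry $\theta_k + \theta_{N+1-k} = 2\pi$, which gives $\sin\theta_{N+1-k} = -\sin\theta_k$; this is what drives the pairwise cancellations below.

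The core of the argument is to show $\omega_{N+1} = \omega_1$ by pairing the recursion \eqref{eq:omkp1-zero-dyn-piby2} at indices placed symmetrically about $k = (N+2)/2$, so that the reflection symmetry of $\omega$ reads $\omega_a = \omega_b$ whenever $a+b = N+2$. The key structural observation is that, relative to Lemma~\ref{lem:zero}, the roles of even and odd $N$ are \emph{swapped}, because the pivot $\theta_k = \pi$ now sits at the half-integer location $k^\ast = (N+1)/2$. For \emph{odd} $N$, $k^\ast$ is an integer, $\theta_{k^\ast} = \pi$ forces $\sin\theta_{k^\ast} = 0$, and \eqref{eq:omkp1-zero-dyn-piby2} yields the base case $\omega_{(N+3)/2} = \omega_{(N+1)/2}$; I would then iterate over the pairs $k = k^\ast - n$ and $k = k^\ast + n$. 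For \emph{even} $N$, no integer index satisfies $\sin\theta_k = 0$, so I would instead start from the straddling pair $k = N/2$ and $k = N/2 + 1$ (at which $\sin\theta_k = \pm\sin(\delths/2)$); adding these two recursions causes the shared $1/\omega_{N/2+1}$ terms to cancel and, via Assumption~\ref{asm:omk-min-periodic}, gives the base case $\omega_{N/2+2} = \omega_{N/2}$.

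In either parity, the inductive step adds the recursions at the symmetric indices and uses the previously established equality to cancel the central $\omega$-terms on both sides, reducing the identity to $(\omega_a - \omega_b)\big[1 - P\sin(\cdot)/(\omega_a\omega_b)\big] = 0$; Assumption~\ref{asm:omk-min-periodic} ($\omega_{k,\mathrm{min}}^2 > |P|$) makes the bracketed factor nonzero, forcing $\omega_a = \omega_b$. Carrying the iteration to its last pair sends the indices to $(N+1,1)$, yielding $\omega_{N+1} = \omega_1$ \emph{directly} --- in contrast to Lemma~\ref{lem:zero}, the auxiliary relation \eqref{eq:om2-eq-om1} is not needed here, precisely because the reflection axis has moved from $(N+3)/2$ to $(N+2)/2$. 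Together with $\theta_{N+1} = \theta_1 \bmod 2\pi$, this establishes periodicity with period $N$. The main obstacle is not the algebra, which is the same telescoping-plus-Assumption mechanism as in Lemma~\ref{lem:zero}, but the index bookkeeping: I must correctly identify the swapped even/odd base cases, track the shifted reflection axis, and verify that the pairing ranges remain within $\{1,\dots,N+1\}$ so that every recursion invoked is valid.
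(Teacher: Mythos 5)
Your proposal is correct and takes essentially the approach the paper intends: the paper omits this proof entirely, stating only that it is ``similar to the one for Lemma \ref{lem:zero},'' and your index-shifted adaptation is precisely that argument carried out in detail. Your key structural observations --- that the even/odd roles swap because the pivot $\theta_k = \pi$ occurs at an integer index only for odd $N$, that the reflection pairs now sum to $N+2$ rather than $N+3$, and that the final pair therefore yields $\omega_{N+1} = \omega_1$ directly with no analogue of \eqref{eq:om2-eq-om1} --- are all correct, and the telescoping-plus-Assumption~\ref{asm:omk-min-periodic} cancellation mechanism goes through exactly as you describe.
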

\begin{proof}
The proof is similar to the one for Lemma \ref{lem:zero}, and is omitted here for brevity.
\end{proof}

\begin{theorem} \label{thm:periodic}
The discrete zero dynamics in \eqref{eq:zerodyn-special} is periodic with period $N$ if $\theta_k \!\!\mod 2\pi \in \{0, \delths/2\}$ for some $k$, and the initial conditions satisfy Assumption \ref{asm:omk-min-periodic} when $\delths$ is given by \eqref{eq:delths}.
\end{theorem}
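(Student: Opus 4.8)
The plan is to obtain the theorem as a direct consequence of Lemmas \ref{lem:zero} and \ref{lem:delths} by exploiting the shift-invariance of the discrete zero dynamics \eqref{eq:zerodyn-special}. The essential observation is that this map is autonomous in the index $k$: the angle update is the fixed increment $\theta_{k+1} = \theta_k + \delths$, and the velocity recursion \eqref{eq:omkp1-zero-dyn-piby2} enters only through $\sin\theta_k$, hence only through $\theta_k \bmod 2\pi$. Neither update depends explicitly on $k$, so any index-shifted segment of a solution is again a solution of \eqref{eq:zerodyn-special}.

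First I would record that the arguments proving Lemmas \ref{lem:zero} and \ref{lem:delths} use the angles solely through the values $\sin\theta_k$ in \eqref{eq:sin-from-pi}; since these depend on $\theta_1$ only via $\theta_1 \bmod 2\pi$, both lemmas in fact hold whenever $\theta_1 \bmod 2\pi = 0$ (respectively $\theta_1 \bmod 2\pi = \delths/2$), not merely for the exact starting values $\theta_1 = 0$ and $\theta_1 = \delths/2$. Next, given the hypothesis that $\theta_{k_0}\bmod 2\pi \in \{0,\delths/2\}$ for some index $k_0$, I would introduce the shifted sequence $\tilde\theta_j \triangleq \theta_{k_0+j-1}$, $\tilde\omega_j \triangleq \omega_{k_0+j-1}$, which by autonomy solves \eqref{eq:zerodyn-special} with $\tilde\theta_1 \bmod 2\pi \in \{0,\delths/2\}$. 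Because $\omega_{k,\mathrm{min}}$ in Assumption \ref{asm:omk-min-periodic} is an infimum over the whole orbit, it is unaffected by the shift, so the shifted sequence inherits Assumption \ref{asm:omk-min-periodic}. Applying Lemma \ref{lem:zero} or Lemma \ref{lem:delths} according to the value of $\tilde\theta_1 \bmod 2\pi$ then yields $\tilde\omega_{N+1} = \tilde\omega_1$, i.e. $\omega_{k_0+N} = \omega_{k_0}$.

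It remains to close the orbit. From \eqref{eq:hybrid-theta-delta} and \eqref{eq:delths} one has $\theta_{k_0+N} = \theta_{k_0} + 2\pi \equiv \theta_{k_0} \bmod 2\pi$, so together with $\omega_{k_0+N} = \omega_{k_0}$ the solution returns to the same point of the phase cylinder $(\theta \bmod 2\pi,\ \omega)$ after $N$ steps. Since \eqref{eq:zerodyn-special} is a deterministic autonomous map on this cylinder---the admissible root of the quadratic \eqref{eq:omkp1-quadratic-soln-piby2} being fixed by the selection rule of Appendix \ref{appendix:zero-dyn-sol}---forward iteration from these coincident states gives $\theta_{k+N} \equiv \theta_k \bmod 2\pi$ and $\omega_{k+N} = \omega_k$ for all $k \ge k_0$ by induction, which establishes the claimed $N$-periodicity.

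I expect the crux to be the bridging step rather than any computation: Lemmas \ref{lem:zero} and \ref{lem:delths} place the distinguished phase at $k=1$, whereas the theorem only assumes it occurs at some unknown index $k_0$. Making the reduction rigorous requires verifying (i) that the velocity recursion depends on the index purely through $\theta_k \bmod 2\pi$, so that the lemma proofs transfer verbatim to a shifted start, and (ii) that the global quantity $\omega_{k,\mathrm{min}}$ is invariant under the shift so that Assumption \ref{asm:omk-min-periodic} is inherited. Once these are established, the theorem follows from the two lemmas together with determinism of the map; the two cases $\theta_{k_0}\bmod 2\pi = 0$ and $\theta_{k_0}\bmod 2\pi = \delths/2$ are dispatched by Lemma \ref{lem:zero} and Lemma \ref{lem:delths}, respectively.
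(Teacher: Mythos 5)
Your proposal is correct and takes essentially the same approach as the paper: both reduce the theorem to Lemmas \ref{lem:zero} and \ref{lem:delths} by treating the pair $(\theta_{k_0}, \omega_{k_0})$ at the distinguished phase as an equivalent initial condition. Your write-up merely makes explicit the supporting details that the paper's brief proof leaves implicit---the shift-invariance of \eqref{eq:zerodyn-special}, the fact that the lemma proofs depend on $\theta_1$ only through $\theta_1 \bmod 2\pi$, the inheritance of Assumption \ref{asm:omk-min-periodic} under the index shift, and the closing determinism argument.
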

\begin{proof}
If $\theta_k \!\!\mod 2\pi \in \{0, \delths/2\}$ for some $k$, the pair $(\theta_k, \omega_k)$ may be treated as an equivalent initial condition $(\theta_1, \omega_1)$ for the dynamical system in \eqref{eq:zerodyn-special}. The proof now follows from Lemmas \ref{lem:zero} and \ref{lem:delths}, which assert periodicity of the dynamical system in \eqref{eq:zerodyn-special} for initial conditions with $\theta_1 = 0$ and $\theta_1 = \delths/2$ respectively as long as Assumption \ref{asm:omk-min-periodic} holds. 
\end{proof}

\begin{corollary} \label{cor:stable}
    The periodic orbits described in Theorem \ref{thm:periodic} are stable but not attractive.
\end{corollary}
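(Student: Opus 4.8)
The plan is to reduce the two-dimensional DZD \eqref{eq:zerodyn-special} to its scalar first-return map on a Poincar\'e section and to read off both properties directly from the periodicity established in Theorem \ref{thm:periodic}. Because the first component of \eqref{eq:zerodyn-special} advances $\theta$ rigidly by $\delths = 2\pi/N$, a trajectory meeting the section $\theta \!\!\mod 2\pi = \theta^*$ (where $\theta^*$ is an orbit value for which $\theta_k \!\!\mod 2\pi \in \{0, \delths/2\}$ at some step) returns to it after exactly $N$ iterations, and the only coordinate transverse to the flow is $\omega$. The solution \eqref{eq:omkp1-quadratic-soln-piby2} defines $\omega_{k+1} = F(\theta_k, \omega_k)$, smooth in $\omega_k$ wherever its discriminant is positive, and Assumption \ref{asm:omk-min-periodic} guarantees this along the orbit; composing $F$ over one revolution gives the return map $\mathcal{R}: \omega \mapsto \omega_{N+1}$. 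The decisive observation is that Theorem \ref{thm:periodic} yields $\omega_{N+1} = \omega_1$ not merely at the orbit value $\omega^*$ but for every admissible initial $\omega_1$; since the inequality $\omega_{k,\mathrm{min}}^2 > |P|$ of Assumption \ref{asm:omk-min-periodic} is an open condition, it survives small perturbations of $\omega_1$, so $\mathcal{R}$ equals the identity on a neighborhood of $\omega^*$.

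Given $\mathcal{R} = \mathrm{id}$ near $\omega^*$, the section carries a continuum of fixed points, each defining a distinct period-$N$ orbit $\mathcal{O}_{\omega_1}$ of the form \eqref{eq:orbit}, with $\mathcal{O}^* = \mathcal{O}_{\omega^*}$. For stability I would invoke continuity: under Assumption \ref{asm:omk-min-periodic} the iterates $F^{\,j}(\theta^*, \omega_1)$, $j = 0, \dots, N-1$, that make up $\mathcal{O}_{\omega_1}$ depend continuously on $\omega_1$, so for each $\epsilon > 0$ there is a $\delta > 0$ with $|\omega_1 - \omega^*| < \delta$ forcing the deviation of $\mathcal{O}_{\omega_1}$ from $\mathcal{O}^*$ over one period below $\epsilon$. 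Because both orbits are $N$-periodic this bound persists for all $k$, and continuity of $\Phi$ and $\Psi$ in \eqref{eq:h-v-zero-dyn} lifts the estimate from $(\theta,\omega)$ to the full state $(q,\dot q)$. Hence trajectories initiated near $\mathcal{O}^*$ on the section remain near it for all time, establishing orbital stability within the family.

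Non-attractivity follows from the same identity with no further work: for any $\omega_1 \neq \omega^*$ one has $\mathcal{R}^n(\omega_1) = \omega_1$, so the trajectory remains on the distinct orbit $\mathcal{O}_{\omega_1}$ and its separation from $\mathcal{O}^*$, measured at the section, stays fixed at $|\omega_1 - \omega^*| > 0$ for every revolution. No neighborhood of $\mathcal{O}^*$ therefore lies in a basin of attraction, so $\mathcal{O}^*$ is not asymptotically stable. The two conclusions together exhibit $\mathcal{O}^*$ as a neutrally stable member of a one-parameter family of periodic orbits, the behavior anticipated by the level sets of the approximate invariant \eqref{eq:invariant-piby2}.

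The delicate point is the neutral $\theta$-direction. Perturbing $\theta_1$ off $\{0, \delths/2\}$ preserves $N$-periodicity of $\theta \!\!\mod 2\pi$ but destroys periodicity of $\omega$, producing the slow drift documented in the simulations; consequently the full period-$N$ monodromy of \eqref{eq:zerodyn-special} is lower triangular with both diagonal entries equal to $1$ (the $\theta$-row unit entry is structural, and the $\omega$-row entry is $\mathcal{R}'(\omega^*) = \prod_k \partial_\omega F = 1$ because $\mathcal{R} = \mathrm{id}$). The statement must therefore be read as stability of the orbit \emph{within the one-parameter family of periodic solutions}, which the Poincar\'e reduction makes precise by quotienting out the rigid advance of $\theta$; I expect the main effort to lie in phrasing this reduction carefully and in confirming that the off-diagonal monodromy term does not inject secular growth transverse to the family, rather than in any of the continuity or smoothness estimates, which are routine consequences of Assumption \ref{asm:omk-min-periodic}.
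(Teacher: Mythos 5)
Your proof is correct and, at its core, is the same argument as the paper's: both rest entirely on the observation that Assumption \ref{asm:omk-min-periodic} is an open condition, so Theorem \ref{thm:periodic} applies to every $\omega$-perturbed initial condition near the orbit, placing each perturbed trajectory on its own distinct period-$N$ orbit --- which bounds the deviation (stability) and rules out convergence (non-attractivity). The paper states this in three sentences: perturb a periodic point $(\theta_k,\omega_k)$ to $(\theta_k,\omega_k+\epsilon_k)$, pull back to an equivalent initial condition at $\theta_1$, and invoke Theorem \ref{thm:periodic} to obtain a distinct periodic orbit. Your Poincar\'e-section packaging ($\mathcal{R}=\mathrm{id}$ on a neighborhood of $\omega^*$) is that same argument in return-map language, with the continuity estimates the paper leaves implicit spelled out.

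Where you genuinely go beyond the paper is the neutral $\theta$-direction. The paper's proof perturbs only $\omega$ and is silent about perturbations moving $\theta_k$ off the grid $\{0,\delths/2\}$; you correctly note that such perturbations destroy periodicity of $\omega$ and produce the secular drift documented in Appendix \ref{sec44}, so the corollary is properly read as stability with respect to $\omega$-perturbations (equivalently, within the one-parameter family of periodic orbits), not unrestricted Lyapunov stability in the $(\theta,\omega)$ plane. That caveat is not pedantry: since the off-diagonal monodromy entry $\partial\omega_{N+1}/\partial\theta_1$ is generically nonzero while the $\theta$-error is rigidly preserved by the map $\theta_{k+1}=\theta_k+\delths$, a fixed $\theta$-offset injects a per-revolution increment into $\omega$, and stability against arbitrary perturbations would fail. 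Your restricted reading is the defensible one, and it is consistent both with the paper's proof (which never perturbs $\theta$) and with its numerical observations; the paper's statement simply does not confront this distinction.
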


\begin{proof}
    The proof follows directly from Theorem \ref{thm:periodic}. Let the periodic solution be given by the points $(\theta_1, \omega_1)$, $(\theta_2, \omega_2)$, \dots, $(\theta_N, \omega_N)$. If any periodic point $(\theta_k, \omega_k)$ is perturbed to $(\theta_k, \omega_k + \epsilon_k)$, the equivalent initial condition $(\theta_1, \omega_1 + \epsilon_1)$ can be found by decrementing $k$ while solving \eqref{eq:zerodyn-special}. Since the new initial condition satisfies Assumption \ref{asm:omk-min-periodic}, it results in a distinct periodic orbit with $(\theta_{N+1}, \omega_{N+1}) = (\theta_1, \omega_1 + \epsilon_1)$.
\end{proof}
 
It follows from Corollary \ref{cor:stable} that the eigenvalues of the Floquet Matrix \cite{ChaosBook}
\begin{equation} \label{eq:floquet-matrix}
    M = J_N \cdots J_2\, J_1, \quad
    J_k(\theta_k, \omega_k) \triangleq \begin{bmatrix}
        \dfrac{\partial\theta_{k+1}}{\partial\theta_k} & \dfrac{\partial\theta_{k+1}}{\partial\omega_k} \\
        \dfrac{\partial\omega_{k+1}}{\partial\theta_k} & \dfrac{\partial\omega_{k+1}}{\partial\omega_k}
    \end{bmatrix}
\end{equation}

\noindent have magnitude equal to $1$. This is verified numerically.

Under the condition that $\delths = (p/q) 2\pi$, $p, q \in Z^+$, we make the following additional observations about \eqref{eq:zerodyn-special} based on extensive numerical investigation:
\begin{enumerate}
    \item The solutions are periodic when $\theta_k \!\!\mod 2\pi \in \{0, \delths/2\}$ for some $k$. 
    \item If $\theta_k \!\!\mod 2\pi \notin \{0, \delths/2\}$ for any $k$, the solutions are not periodic, and a gradual drift in the values of $\omega_k$ associated with the same value of $\theta_k \!\!\mod 2\pi$ is observed.
    \begin{enumerate}
        \item The magnitude of the drift in $\omega_k$ is smaller if the values of $\omega_k$ are larger.
        \item The magnitude of the drift in $\omega_k$ is smaller if the value of $\delths$ is smaller, and the DZD more closely approximates the zero dynamics for the continuous-time system.
    \end{enumerate}
\end{enumerate}

\begin{remark}
    It follows from item 2 above that the value of $\bar E_k$ in \eqref{eq:invariant-piby2} drifts commensurate with the drift in $\omega_k$ when $\phi \in \{-\pi/2, \pi/2\}$.
\end{remark}

When $\phi \notin \{-\pi/2, \pi/2\}$, numerical investigation verifies that the DZD in \eqref{eq:zerodyn} does not exhibit periodic motion for any choice of $\delths$. The value of $\omega_k$ increases or decreases as the devil-stick completes successive rotations. This behavior agrees with the results for the continuous-time propeller motion \cite{khandelwal_propeller_2025}. 

\subsection{Simulation of Zero Dynamics} \label{sec45}

\begin{figure}[t]
    \centering
    \psfrag{A}{\hspace{-10pt} \footnotesize{$\theta$ (rad)}}
    \psfrag{B}{\hspace{-18pt} \footnotesize{$\omega$ (rad/s)}}
    \psfrag{T}{\footnotesize{$k$}}
    \psfrag{Q}{\footnotesize{$\bar E_k$}}
    \psfrag{E}{\footnotesize{$\bar E = $}}
    \psfrag{F}{\hspace{-10pt} \footnotesize{$I_k$ (Ns)}}
    \psfrag{R}{\hspace{-10pt} \footnotesize{$r_k$ (m)}}
    \psfrag{C}{$\rho_x$}
    \psfrag{D}{$\rho_y$}
    \psfrag{G}{$\drho_x$}
    \psfrag{L}{$\drho_y$}
    \psfrag{O}{$\mathcal{O}^*$}
    \includegraphics[width=\linewidth]{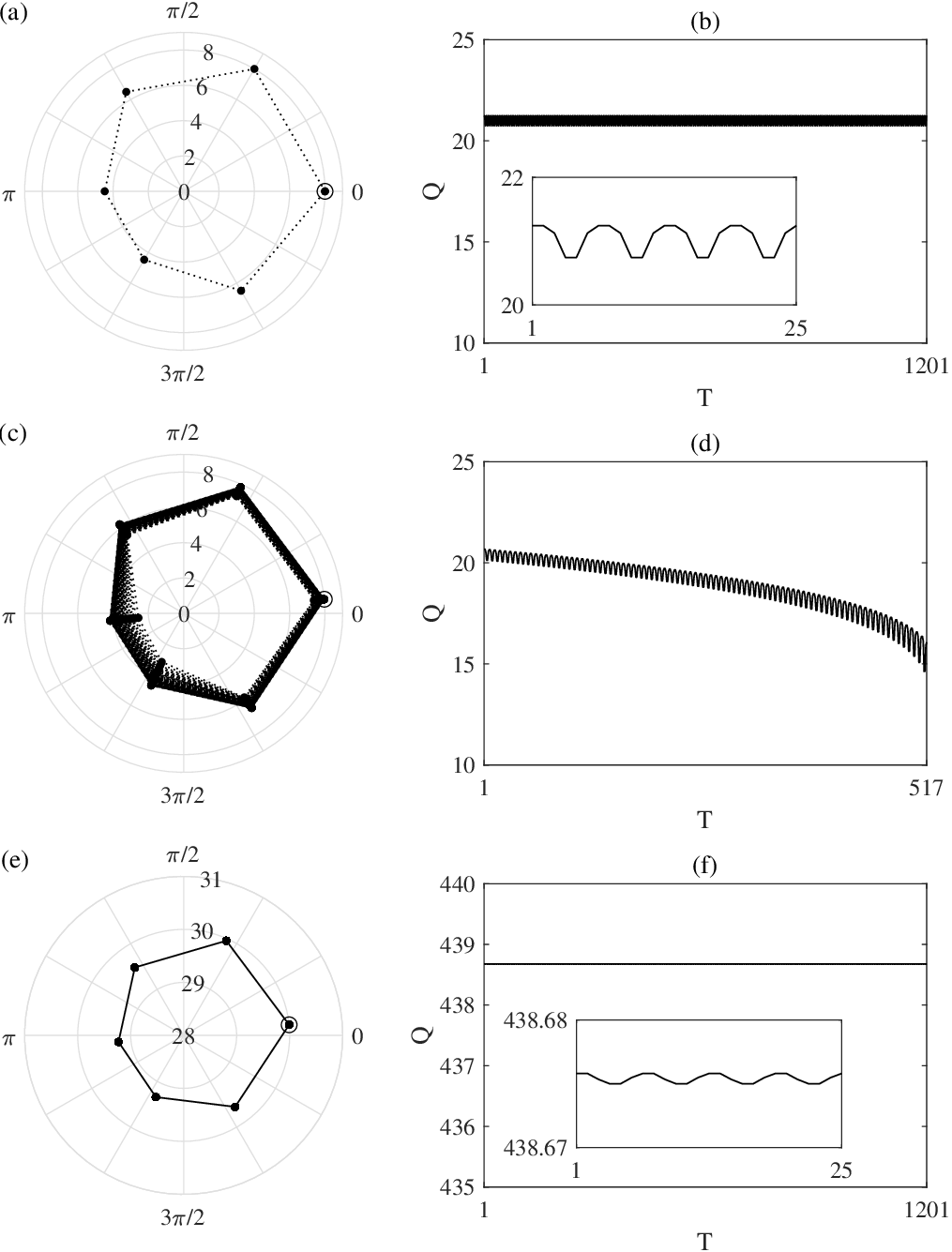}
    \caption{Simulation of the discrete zero dynamics of propeller motion using purely impulsive inputs for $\delths = 2\pi/6$, $R = 1$ and $\phi = \pi/2$: 
    (a) Polar plot showing periodic behavior and (b) Approximate invariant $\bar E_k$, both for initial condition $(\theta_1, \omega_1) = (0, 8)$ over 1200 simulation steps; 
    (c) Polar plot showing gradual decrease in $\omega$ and (d) Approximate invariant $\bar E_k$, both for initial condition $(\theta_1, \omega_1) = (0.1, 8)$ over 516 simulation steps, after which real solutions to \eqref{eq:omkp1-quadratic-soln-piby2} are not obtained;
    (e) Polar plot showing negligible drift in $\omega$ and (f) Approximate invariant $\bar E_k$, both for initial condition $(\theta_1, \omega_1) = (0.1, 30)$ over 1200 simulation steps.
    The insets in (b) and (f) show a magnified view of the fluctuations in $\bar E_k$ over 4 complete rotations, or 24 simulation steps.}
    \label{fig:zero-N6}
\end{figure}

\begin{figure}[h]
    \centering
    \psfrag{A}{\hspace{-10pt} \footnotesize{$\theta$ (rad)}}
    \psfrag{B}{\hspace{-18pt} \footnotesize{$\omega$ (rad/s)}}
    \psfrag{T}{\footnotesize{$k$}}
    \psfrag{Q}{\footnotesize{$\bar E_k$}}
    \psfrag{E}{\footnotesize{$\bar E = $}}
    \psfrag{F}{\hspace{-10pt} \footnotesize{$I_k$ (Ns)}}
    \psfrag{R}{\hspace{-10pt} \footnotesize{$r_k$ (m)}}
    \psfrag{C}{$\rho_x$}
    \psfrag{D}{$\rho_y$}
    \psfrag{G}{$\drho_x$}
    \psfrag{L}{$\drho_y$}
    \psfrag{O}{$\mathcal{O}^*$}
    \includegraphics[width=\linewidth]{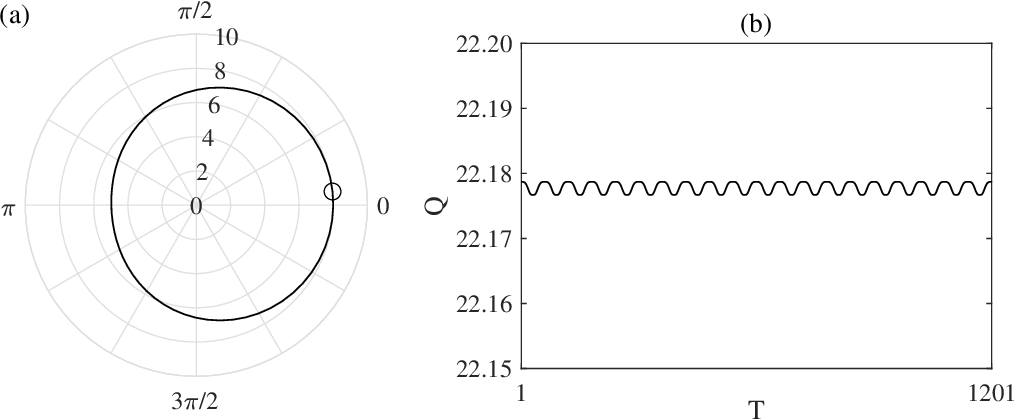}
    \caption{Simulation of the discrete zero dynamics of propeller motion using purely impulsive inputs for $\delths = 2\pi/60$, $R = 1$ and $\phi = \pi/2$ over 1200 simulation steps for initial condition $(\theta_1, \omega_1) = (0.1, 8)$: 
    (a) Polar plot and (b) Approximate invariant $\bar E_k$.}
    \label{fig:zero-limit}
\end{figure}

\begin{figure}[h]
    \centering
    \psfrag{A}{\hspace{-10pt} \footnotesize{$\theta$ (rad)}}
    \psfrag{B}{\hspace{-18pt} \footnotesize{$\omega$ (rad/s)}}
    \psfrag{T}{\footnotesize{$k$}}
    \psfrag{Q}{\footnotesize{$\bar E_k$}}
    \psfrag{E}{\footnotesize{$\bar E = $}}
    \psfrag{F}{\hspace{-10pt} \footnotesize{$I_k$ (Ns)}}
    \psfrag{R}{\hspace{-10pt} \footnotesize{$r_k$ (m)}}
    \psfrag{C}{$\rho_x$}
    \psfrag{D}{$\rho_y$}
    \psfrag{G}{$\drho_x$}
    \psfrag{L}{$\drho_y$}
    \psfrag{O}{$\mathcal{O}^*$}
    \includegraphics[width=\linewidth]{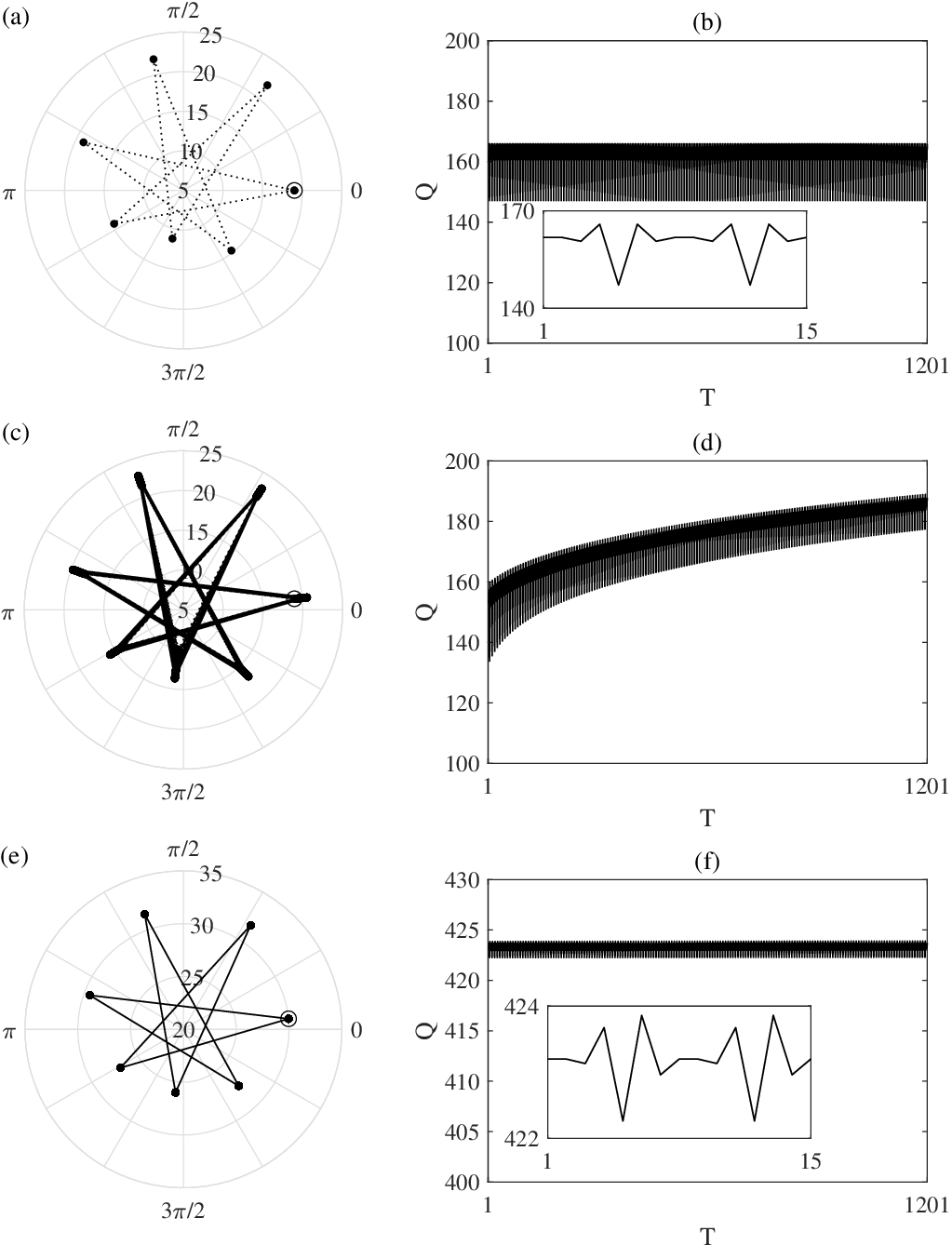}
    \caption{Simulation of the discrete zero dynamics of propeller motion using purely impulsive inputs for $\delths = (3/7) 2\pi$, $R = 1$ and $\phi = \pi/2$ over 1200 simulation steps: 
    (a) Polar plot showing periodic behavior and (b) Approximate invariant $\bar E_k$, both for initial condition $(\theta_1, \omega_1) = (0, 19)$; 
    (c) Polar plot showing gradual increase in $\omega$ and (d) Approximate invariant $\bar E_k$, both for initial condition $(\theta_1, \omega_1) = (0.1, 19)$;
    (e) Polar plot showing negligible drift in $\omega$ and (f) Approximate invariant $\bar E_k$, both for initial condition $(\theta_1, \omega_1) = (0.1, 30)$.
    The insets in (b) and (f) show a magnified view of the fluctuations in $\bar E_k$ over 6 complete rotations, or 14 simulation steps.}
    \label{fig:zero-rational}
\end{figure}

We present a numerical investigation of the zero dynamics in \eqref{eq:zerodyn-special} with the choices
\begin{equation*} 
    R = 1, \quad \phi = \pi/2
\end{equation*}

\noindent to support the claims in Section \ref{sec44} and demonstrate the dependence of the system dynamics on the choice of $\delths$ and initial conditions. The system is simulated over a large number of simulation steps to study the long-term behavior of the system.

\begin{figure}[t]
    \centering
    \psfrag{A}{\hspace{-10pt} \footnotesize{$\theta$ (rad)}}
    \psfrag{B}{\hspace{-18pt} \footnotesize{$\omega$ (rad/s)}}
    \psfrag{T}{\footnotesize{$k$}}
    \psfrag{Q}{\footnotesize{$\bar E_k$}}
    \psfrag{E}{\footnotesize{$\bar E = $}}
    \psfrag{F}{\hspace{-10pt} \footnotesize{$I_k$ (Ns)}}
    \psfrag{R}{\hspace{-10pt} \footnotesize{$r_k$ (m)}}
    \psfrag{C}{$\rho_x$}
    \psfrag{D}{$\rho_y$}
    \psfrag{G}{$\drho_x$}
    \psfrag{L}{$\drho_y$}
    \psfrag{O}{$\mathcal{O}^*$}
    \includegraphics[width=\linewidth]{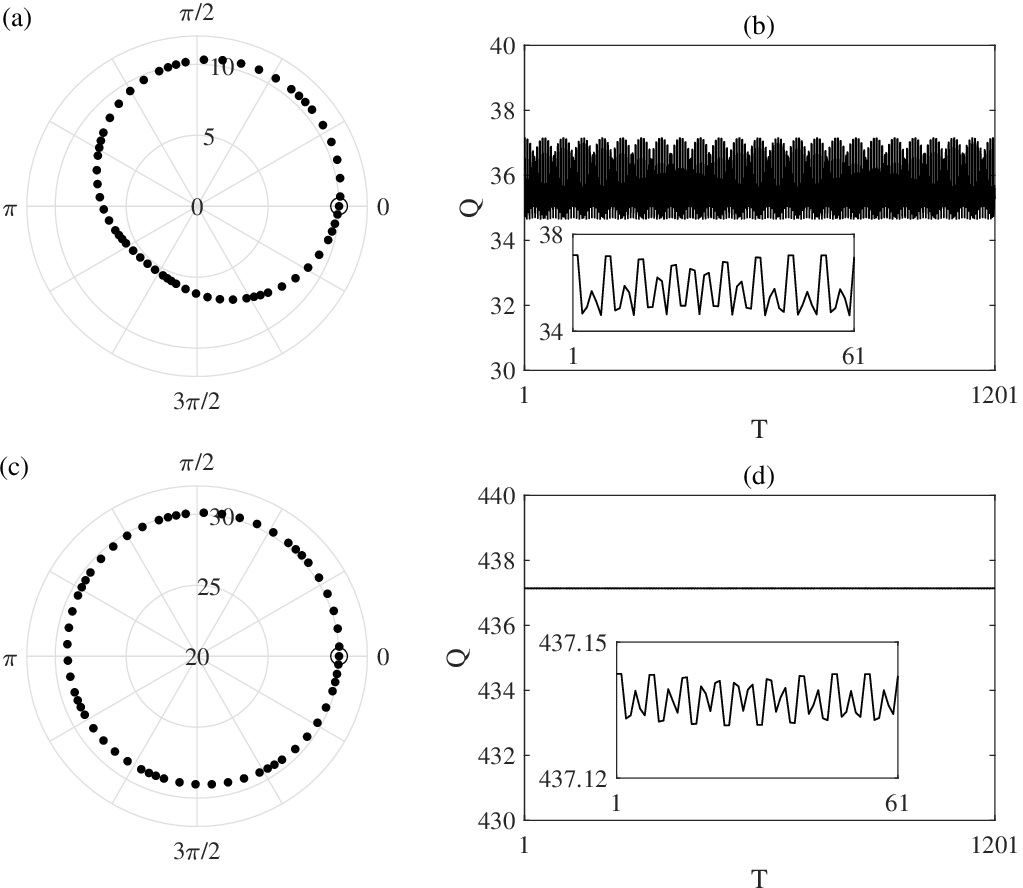}
    \caption{Simulation of the discrete zero dynamics of propeller motion using purely impulsive inputs for $\delths = (\sqrt{2}/5) 2\pi$, $R = 1$ and $\phi = \pi/2$ over 1200 simulation steps: 
    (a) Polar plot shown for 60 steps and (b) Approximate invariant $\bar E_k$, both for initial condition $(\theta_1, \omega_1) = (0, 10)$; 
    (c) Polar plot shown for 60 steps and (d) Approximate invariant $\bar E_k$, both for initial condition $(\theta_1, \omega_1) = (0, 30)$.
    The insets in (b) and (d) show a magnified view of the fluctuations in $\bar E_k$ over 60 simulation steps.}
    \label{fig:zero-irrational}
\end{figure}

\subsubsection{$\delths$: Integer submultiple of $2\pi$}
We first consider $\delths$ to have the form in \eqref{eq:delths} with $N = 6$. 
From the initial condition $(\theta_1, \omega_1) = (0, 8)$, Fig.\ref{fig:zero-N6}(a) shows the periodic behavior of the system over 1200 simulation steps. The approximate invariant $\bar E_k$ from \eqref{eq:invariant-piby2} is shown in Fig.\ref{fig:zero-N6}(b); its value fluctuates periodically. Similar periodic behavior is observed if the value of $\theta_1$ is changed to $\delths/2$, or if the value of $\omega_1$ is changed ensuring Assumption \ref{asm:omk-min-periodic} is satisfied. The fluctuations in $\bar E_k$ are smaller for larger values of $\omega_k$, indicating that \eqref{eq:invariant-piby2} is more accurate for larger values of $\omega_k$.

For the initial condition $(\theta_1, \omega_1) = (0.1, 8)$, Fig.\ref{fig:zero-N6}(c) shows a gradual `drift' in $\omega$. The drift in $\omega_k$ from $\omega_1$, when $\theta_k \!\!\mod 2\pi = 0.1$ is found to be $-7.48 \%$ over 516 simulation steps. The simulation window is reduced from 1200 since \eqref{eq:omkp1-quadratic-soln-piby2} ceases to permit solutions as $\omega_k$ reduces below a certain value. The approximate invariant $\bar E_k$ from \eqref{eq:invariant-piby2}, shown in Fig.\ref{fig:zero-N6}(d), is no longer periodic and reduces in value commensurate with the drift in $\omega_k$.

For the initial condition $(\theta_1, \omega_1) = (0.1, 30)$, Fig.\ref{fig:zero-N6}(e) shows that the behavior of the system appears to be very nearly periodic. The drift in $\omega$ is barely noticeable, with the drift in $\omega_k$ from $\omega_1$, when $\theta_k \!\!\mod 2\pi = 0.1$ found to be $-5.05 \times 10^{-5} \%$ over 1200 simulation steps. The value of $\bar E_k$ from \eqref{eq:invariant-piby2}, shown in Fig.\ref{fig:zero-N6}(f), fluctuates and drifts minimally owing to the large value of $\omega_1$. The drift in $\omega$ is also observed for other choices of $\theta_1$ for which $\theta_k \!\!\mod 2\pi \notin \{0, \delths/2\}$ for any $k$, with the drift being smaller for larger values of $\omega_k$.

For a large value of $N$, $N = 60$ in \eqref{eq:delths}, the system more closely approximates the continuous-time system. It can be seen from Fig.\ref{fig:zero-limit}(a) that the behavior of the system is apparently periodic from the initial condition $(\theta_1, \omega_1) = (0.1, 8)$, with a negligibly small drift in $\omega_k$, even though $\theta_k \!\!\mod 2\pi \notin \{0, \delths/2\}$ for any $k$ and $\omega_1$ is small. The fluctuation and drift in $\bar E_k$ from \eqref{eq:invariant-piby2}, shown in Fig.\ref{fig:zero-limit}(b) is also small. This contrasts the behavior seen in Fig.\ref{fig:zero-N6}(c)-(d), for the same initial conditions but a smaller value of $N$.

\subsubsection{$\delths$: Rational submultiple of $2\pi$}
We now consider $\delths = (3/7) 2\pi$, for which $\theta_{k+7} = \theta_k$ after three complete rotations of the devil-stick. From the initial condition $(\theta_1, \omega_1) = (0, 19)$, it can be seen from Fig.\ref{fig:zero-rational}(a) that the system shows periodic behavior over 1200 simulation steps. The value of $\bar E_k$ from \eqref{eq:invariant-piby2}, shown in Fig.\ref{fig:zero-rational}(b) fluctuates periodically. Periodic behavior is also observed if $\theta_1$ is changed to $\delths/2$, or if the value of $\omega_1$ is altered ensuring that \eqref{eq:omkp1-quadratic-soln-piby2} permits solutions.

For the initial condition $(\theta_1, \omega_1) = (0.1, 19)$, Fig.\ref{fig:zero-rational}(c) shows a drift in $\omega_k$ from $\omega_1$ by $8.41 \%$ over 1200 simulation steps. The value of $\bar E_k$ from \eqref{eq:invariant-piby2} is shown in Fig.\ref{fig:zero-rational}(d) and increases gradually.

For the initial condition $(\theta_1, \omega_1) = (0.1, 30)$, Fig.\ref{fig:zero-rational}(e) shows that the magnitude of the drift $\omega_k$ from $\omega_1$ is much smaller, $5.54 \times 10^{-3} \%$ over 1200 simulation steps. The fluctuation and drift in $\bar E_k$ from \eqref{eq:invariant-piby2}, shown in Fig.\ref{fig:zero-rational}(f), is also smaller.

\subsubsection{$\delths$: Irrational submultiple of $2\pi$}
If $\delths = (\sqrt{2}/5) 2\pi$, an irrational submultiple of $2\pi$, it is known that periodic motion cannot occur since \eqref{eq:hybrid-theta-delta} represents an irrational rotation. Over time, however, the iterations $\theta_k \!\!\mod 2\pi$ densely occupy points in $S^1$. Simulations of the system over 1200 steps with $\theta_1 = 0$ and $\omega_1 = 10$ and $30$ are shown in Fig.\ref{fig:zero-irrational}. Since $\theta_k$ values will densely fill the circle, the qualitative behavior of the system will be the same regardless of $\theta_1$, and different choices of $\theta_1$ need not be considered. It is seen for both choices of $\omega_1$ that the maxima and minima of $\omega_k$ also occur at $\theta_k \!\!\mod 2\pi \approx \delths/2$ and $\theta_k \!\!\mod 2\pi \approx \pi + \delths/2$ respectively, which is consistent with the prediction from \eqref{eq:invariant-piby2}. 
It can be seen from Fig.\ref{fig:zero-irrational}(b) and \ref{fig:zero-irrational}(d) that the fluctuation in the value of $\bar E_k$ from \eqref{eq:invariant-piby2} reduces as the values of $\omega_k$ increase.

\balance
\bibliographystyle{IEEEtran}      
\bibliography{ref}   

\end{document}